\definecolor{note_fontcolor}{rgb}{0.800781, 0.800781, 0.800781}
\providecommand{\tabularnewline}{\\}
\numberwithin{equation}{section}
\numberwithin{figure}{section}
\theoremstyle{plain}
\newtheorem{thm}{\protect\theoremname}
 \theoremstyle{definition}
\newtheorem{defn}[thm]{\protect\definitionname}
\theoremstyle{definition}
\newtheorem{cmnt}[thm]{Comment}
\newtheorem{remark}[thm]{Remark}
\theoremstyle{plain}
\newtheorem{prop}[thm]{\protect\propositionname}
\theoremstyle{plain}
\newtheorem{lem}[thm]{\protect\lemmaname}
\def\QED{\hskip0.1em\hfill\null\ \null\nobreak\hfill
\kern3pt\lower1.8pt\vbox{\hrule\hbox
{\vrule\kern1pt\vbox{\kern1.7pt \hbox{$\scriptstyle
QED$}\kern0.2pt}\kern1pt\vrule}\hrule}}
\numberwithin{equation}{section}
\renewcommand{\mathcal}[1]{\mathscr{#1}}
\newcommand{\ie}{\textit{i.e.}}
\newcommand{\eg}{\textit{e.g.}}
\newcommand{\shortrule}{\rule[0.5ex]{1cm}{0.5pt}}
\newcommand{\InCoord}[1]{
\medskip

\centerline{\shortrule$\blacktriangledown\blacktriangledown\blacktriangledown$\shortrule}

{\footnotesize #1}

\centerline{\shortrule$\blacktriangle\blacktriangle\blacktriangle$\shortrule}
\medskip
}
\newcommand{\heb}[1]{}
\definecolor{darkocre}{RGB}{121,51,12} 
\definecolor{darkgreen}{RGB}{0,100,20}
\newcommand{\strn}{deformation jet}
\newcommand{\vjet}{velocity jet}
  \providecommand{\definitionname}{Definition}
  \providecommand{\examplename}{Example}
  \providecommand{\lemmaname}{Lemma}
  \providecommand{\propositionname}{Proposition}
\providecommand{\theoremname}{Theorem}
\begin{document}

\global\long\def\st{X}
\global\long\def\bdl{Y}
\global\long\def\pr{\pi}
\global\long\def\lgr{\mathcal{L}}
\global\long\def\base{\mathcal{B}}
\global\long\def\Hom{\mathrm{Hom}}


\global\long\def\bolda{\boldsymbol{a}}
 \global\long\def\boldb{\boldsymbol{b}}
 \global\long\def\boldc{\boldsymbol{c}}
 \global\long\def\boldd{\boldsymbol{d}}
 \global\long\def\bolde{\boldsymbol{e}}
 \global\long\def\boldf{\boldsymbol{f}}
 \global\long\def\boldg{\boldsymbol{g}}
 \global\long\def\boldh{\boldsymbol{h}}
 \global\long\def\boldi{\boldsymbol{i}}
 \global\long\def\boldj{\boldsymbol{j}}
 \global\long\def\boldk{\boldsymbol{k}}
 \global\long\def\boldl{\boldsymbol{l}}
 \global\long\def\boldm{\boldsymbol{m}}
 \global\long\def\boldn{\boldsymbol{n}}
 \global\long\def\boldo{\boldsymbol{o}}
 \global\long\def\boldp{\boldsymbol{p}}
 \global\long\def\boldq{\boldsymbol{q}}
 \global\long\def\boldr{\boldsymbol{r}}
 \global\long\def\bolds{\boldsymbol{s}}
 \global\long\def\boldt{\boldsymbol{t}}
 \global\long\def\boldu{\boldsymbol{u}}
 \global\long\def\boldv{\boldsymbol{v}}
 \global\long\def\boldw{\boldsymbol{w}}
 \global\long\def\boldx{\boldsymbol{x}}
 \global\long\def\boldy{\boldsymbol{y}}
 \global\long\def\boldz{\boldsymbol{z}}

\global\long\def\boldA{\boldsymbol{A}}
 \global\long\def\boldB{\boldsymbol{B}}
 \global\long\def\boldC{\boldsymbol{C}}
 \global\long\def\boldD{\boldsymbol{D}}
 \global\long\def\boldE{\boldsymbol{E}}
 \global\long\def\boldF{\boldsymbol{F}}
 \global\long\def\boldG{\boldsymbol{G}}
 \global\long\def\boldH{\boldsymbol{H}}
 \global\long\def\boldI{\boldsymbol{I}}
 \global\long\def\boldJ{\boldsymbol{J}}
 \global\long\def\boldK{\boldsymbol{K}}
 \global\long\def\boldL{\boldsymbol{L}}
 \global\long\def\boldM{\boldsymbol{M}}
 \global\long\def\boldN{\boldsymbol{N}}
 \global\long\def\boldO{\boldsymbol{O}}
 \global\long\def\boldP{\boldsymbol{P}}
 \global\long\def\boldQ{\boldsymbol{Q}}
 \global\long\def\boldR{\boldsymbol{R}}
 \global\long\def\boldS{\boldsymbol{S}}
 \global\long\def\boldT{\boldsymbol{T}}
 \global\long\def\boldU{\boldsymbol{U}}
 \global\long\def\boldV{\boldsymbol{V}}
 \global\long\def\boldW{\boldsymbol{W}}
 \global\long\def\boldX{\boldsymbol{X}}
 \global\long\def\boldY{\boldsymbol{Y}}
 \global\long\def\boldZ{\boldsymbol{Z}}

\global\long\def\calA{{\mathcal{A}}}
 \global\long\def\calB{{\mathcal{B}}}
 \global\long\def\calC{{\mathcal{C}}}
 \global\long\def\calD{{\mathcal{D}}}
 \global\long\def\calE{{\mathcal{E}}}
 \global\long\def\calF{{\mathcal{F}}}
 \global\long\def\calG{{\mathcal{G}}}
 \global\long\def\calH{{\mathcal{H}}}
 \global\long\def\calI{{\mathcal{I}}}
 \global\long\def\calJ{{\mathcal{J}}}
 \global\long\def\calK{{\mathcal{K}}}
 \global\long\def\calL{{\mathcal{L}}}
 \global\long\def\calM{{\mathcal{M}}}
 \global\long\def\calN{{\mathcal{N}}}
 \global\long\def\calO{{\mathcal{O}}}
 \global\long\def\calP{{\mathcal{P}}}
 \global\long\def\calQ{{\mathcal{Q}}}
 \global\long\def\calR{{\mathcal{R}}}
 \global\long\def\calS{{\mathcal{S}}}
 \global\long\def\calT{{\mathcal{T}}}
 \global\long\def\calU{{\mathcal{U}}}
 \global\long\def\calV{{\mathcal{V}}}
 \global\long\def\calW{{\mathcal{W}}}
 \global\long\def\calX{{\mathcal{X}}}
 \global\long\def\calY{{\mathcal{Y}}}
 \global\long\def\calZ{{\mathcal{Z}}}

\global\long\def\frakA{\mathfrak{A}}
 \global\long\def\frakB{\mathfrak{B}}
 \global\long\def\frakC{\mathfrak{C}}
 \global\long\def\frakD{\mathfrak{D}}
 \global\long\def\frakE{\mathfrak{E}}
 \global\long\def\frakF{\mathfrak{F}}
 \global\long\def\frakG{\mathfrak{G}}
 \global\long\def\frakH{\mathfrak{H}}
 \global\long\def\frakI{\mathfrak{I}}
 \global\long\def\frakJ{\mathfrak{J}}
 \global\long\def\frakK{\mathfrak{K}}
 \global\long\def\frakL{\mathfrak{L}}
 \global\long\def\frakM{\mathfrak{M}}
 \global\long\def\frakN{\mathfrak{N}}
 \global\long\def\frakO{\mathfrak{O}}
 \global\long\def\frakP{\mathfrak{P}}
 \global\long\def\frakQ{\mathfrak{Q}}
 \global\long\def\frakR{\mathfrak{R}}
 \global\long\def\frakS{\mathfrak{S}}
 \global\long\def\frakT{\mathfrak{T}}
 \global\long\def\frakU{\mathfrak{U}}
 \global\long\def\frakV{\mathfrak{V}}
 \global\long\def\frakW{\mathfrak{W}}
 \global\long\def\frakX{\mathfrak{X}}
 \global\long\def\frakY{\mathfrak{Y}}
 \global\long\def\frakZ{\mathfrak{Z}}

\global\long\def\fraka{\mathfrak{a}}
 \global\long\def\frakb{\mathfrak{b}}
 \global\long\def\frakc{\mathfrak{c}}
 \global\long\def\frakd{\mathfrak{d}}
 \global\long\def\frake{\mathfrak{e}}
 \global\long\def\frakf{\mathfrak{f}}
 \global\long\def\frakg{\mathfrak{g}}
 \global\long\def\frakh{\mathfrak{h}}
 \global\long\def\fraki{\mathfrak{i}}
 \global\long\def\frakj{\mathfrak{j}}
 \global\long\def\frakk{\mathfrak{k}}
 \global\long\def\frakl{\mathfrak{l}}
 \global\long\def\frakm{\mathfrak{m}}
 \global\long\def\frakn{\mathfrak{n}}
 \global\long\def\frako{\mathfrak{o}}
 \global\long\def\frakp{\mathfrak{p}}
 \global\long\def\frakq{\mathfrak{q}}
 \global\long\def\frakr{\mathfrak{r}}
 \global\long\def\fraks{\mathfrak{s}}
 \global\long\def\frakt{\mathfrak{t}}
 \global\long\def\fraku{\mathfrak{u}}
 \global\long\def\frakv{\mathfrak{v}}
 \global\long\def\frakw{\mathfrak{w}}
 \global\long\def\frakx{\mathfrak{x}}
 \global\long\def\fraky{\mathfrak{y}}
 \global\long\def\frakz{\mathfrak{z}}

\global\long\def\bbA{{\mathbb{A}}}
 \global\long\def\bbB{{\mathbb{B}}}
 \global\long\def\bbC{{\mathbb{C}}}
 \global\long\def\bbD{{\mathbb{D}}}
 \global\long\def\bbE{{\mathbb{E}}}
 \global\long\def\bbF{{\mathbb{F}}}
 \global\long\def\bbG{{\mathbb{G}}}
 \global\long\def\bbH{{\mathbb{H}}}
 \global\long\def\bbI{{\mathbb{I}}}
 \global\long\def\bbJ{{\mathbb{J}}}
 \global\long\def\bbK{{\mathbb{K}}}
 \global\long\def\bbL{{\mathbb{L}}}
 \global\long\def\bbM{{\mathbb{M}}}
 \global\long\def\bbN{{\mathbb{N}}}
 \global\long\def\bbO{{\mathbb{O}}}
 \global\long\def\bbP{{\mathbb{P}}}
 \global\long\def\bbQ{{\mathbb{Q}}}
 \global\long\def\bbR{{\mathbb{R}}}
 \global\long\def\bbS{{\mathbb{S}}}
 \global\long\def\bbT{{\mathbb{T}}}
 \global\long\def\bbU{{\mathbb{U}}}
 \global\long\def\bbV{{\mathbb{V}}}
 \global\long\def\bbW{{\mathbb{W}}}
 \global\long\def\bbX{{\mathbb{X}}}
 \global\long\def\bbY{{\mathbb{Y}}}
 \global\long\def\bbZ{{\mathbb{Z}}}
\global\long\def\scrA{\mathscr{A}}
 \global\long\def\scrB{\mathscr{B}}
 \global\long\def\scrC{\mathscr{C}}
 \global\long\def\scrD{\mathscr{D}}
 \global\long\def\scrE{\mathscr{E}}
 \global\long\def\scrF{\mathscr{F}}
 \global\long\def\scrG{\mathscr{G}}
 \global\long\def\scrH{\mathscr{H}}
 \global\long\def\scrI{\mathscr{I}}
 \global\long\def\scrJ{\mathscr{J}}
 \global\long\def\scrK{\mathscr{K}}
 \global\long\def\scrL{\mathscr{L}}
 \global\long\def\scrM{\mathscr{M}}
 \global\long\def\scrN{\mathscr{N}}
 \global\long\def\scrO{\mathscr{O}}
 \global\long\def\scrP{\mathscr{P}}
 \global\long\def\scrQ{\mathscr{Q}}
 \global\long\def\scrR{\mathscr{R}}
 \global\long\def\scrS{\mathscr{S}}
 \global\long\def\scrT{\mathscr{T}}
 \global\long\def\scrU{\mathscr{U}}
 \global\long\def\scrV{\mathscr{V}}
 \global\long\def\scrW{\mathscr{W}}
 \global\long\def\scrX{\mathscr{X}}
 \global\long\def\scrY{\mathscr{Y}}
 \global\long\def\scrZ{\mathscr{Z}}

\global\long\def\brk#1{\left(#1\right)}
\global\long\def\Brk#1{\left[#1\right]}
\global\long\def\BRK#1{\left\{  #1\right\}  }
\global\long\def\Average#1{\left\langle #1\right\rangle }
\global\long\def\sAverage#1{\langle#1\rangle}
\global\long\def\mean#1{\overline{#1}}
\global\long\def\Floor#1{\lfloor#1 \rfloor}
 \global\long\def\Ceil#1{\lceil#1 \rceil}
 \global\long\def\Abs#1{\left| #1 \right|}
\global\long\def\Norm#1{\left\Vert #1 \right\Vert }
\global\long\def\mymat#1{\begin{pmatrix} #1 \end{pmatrix}}
 \global\long\def\Cases#1{\begin{cases}
 #1 \end{cases}}

\global\long\def\deriv#1#2{\frac{d#1}{d#2}}
 \global\long\def\dderiv#1#2{\frac{d^{2}#1}{d#2^{2}}}

\global\long\def\pd#1#2{\frac{\partial#1}{\partial#2}}
 \global\long\def\pdd#1#2{\frac{\partial^{2}#1}{\partial#2^{2}}}
 \global\long\def\pddm#1#2#3{\frac{\partial^{2}#1}{\partial#2\,\partial#3}}
 \global\long\def\pddd#1#2{\frac{\partial^{3}#1}{\partial#2^{3}}}
 \global\long\def\pdddd#1#2{\frac{\partial^{4}#1}{\partial#2^{4}}}

\global\long\def\bsplit{\begin{split}\end{split}
 }
 \global\long\def\esplit{{split}}
 \global\long\def\baligned{\begin{aligned}\end{aligned}
 }
 \global\long\def\ealigned{{aligned}}

\global\long\def\Emph#1{{\slshape\bfseries#1}}
\global\long\def\EMPH#1{{\bfseries\fontfamily{pzc}\large#1}}

\global\long\def\vecsim#1{\underset{\sim}{\boldsymbol{#1}}}
 \global\long\def\matapprox#1{\underset{\approx}{\boldsymbol{#1}}}
 \global\long\def\Vector#1{\smash{\underset{\sim}{\boldsymbol{#1}}}}
 \global\long\def\Matrix#1{\smash{\underset{\approx}{\boldsymbol{#1}}}}
 \global\long\def\Bold#1{\boldsymbol{#1}}
 \providecommand{\bs}[1]{\boldsymbol{#1}}

\providecommand{\grad}{\boldsymbol{\nabla}} \global\long\def\define{\stackrel{\text{def}}{=}}
 \global\long\def\defi{\stackrel{\text{def}}{=}}
 \providecommand{\half}{\frac{1}{2}} \global\long\def\smallhalf{\tfrac{1}{2}}
 \global\long\def\Implies{\Longrightarrow}
 \global\long\def\Iff{\Longleftrightarrow}
 \global\long\def\R{\mathbb{R}}
 \global\long\def\sequence#1#2#3{{#1}_{#2},\ldots,{#1}_{#3}}
 \global\long\def\textand{\quad\text{ and }\quad}
 \global\long\def\Textand{\qquad\text{ and }\qquad}

\global\long\def\var{{\mathop{}\!\mathrm{Var}}}
 \global\long\def\cov{{\mathop{}\!\mathrm{Cov}}}
 \global\long\def\Bas{{\mathop{}\!\mathrm{Bas}}}
 \global\long\def\argmax{\operatornamewithlimits{arg\ max}}
 \global\long\def\argmin{\operatornamewithlimits{arg\ min}}
 \global\long\def\sgn{{\mathop{}\!\mathrm{sgn}}}
 \providecommand{\rank}{{\mathop{}\!\mathrm{rank}}}

\newcommand{\divergence}{\operatorname{div}}
 \global\long\def\Vol{dX}

\global\long\def\S{\mathcal{S}}
 \global\long\def\vp{\varphi}
 \global\long\def\e{\varepsilon}
 \global\long\def\id{\mathop{}\!\mathrm{id}}
 \global\long\def\Q{\mathcal{Q}}

\global\long\def\Image{\mathop{}\!\mathrm{Image}}
 \global\long\def\Hom{\mathop{}\!\mathrm{Hom}}
 \global\long\def\Supp{\mathop{}\mathrm{Support}}

\global\long\def\vals#1{\mathbb{#1}}
\global\long\def\vbm#1{\overrightarrow{#1}}
{} \global\long\def\strain{\mathcal{E}}
\global\long\def\iotaF{P}
 \global\long\def\ForceDensity{\mathbf{b}}
 \global\long\def\LoadingDensity{\mathbf{B}}
 \global\long\def\SurfaceForceDensity{\mathbf{t}}
 \global\long\def\SurfaceLoadingDensity{\mathbf{T}}
 \global\long\def\StressDensity{\mathbf{s}}
 \global\long\def\crel{\Psi}
\global\long\def\ConstitutiveDensity{\mathrm{\psi}}
\global\long\def\sten{U}

\global\long\def\inclusion{\mathcal{I}}
\global\long\def\trst{\tau}
\global\long\def\trstf{\tau}
\global\long\def\subb{\mathcal{P}}
\global\long\def\subm{\mathcal{V}}

\global\long\def\bk{\kappa}
 \global\long\def\jk{j^{1}\bk}
 \global\long\def\dxB{dx_{\B}}
 \global\long\def\dxY{dx_{Y}}
 \global\long\def\dxJY{dx_{J^{1}Y}}
 \global\long\def\parB{\partial^{\B}}
 \global\long\def\parY{\partial^{Y}}
 \global\long\def\parJY{\partial^{J^{1}Y}}

\global\long\def\Cone{C^{1}}
 \global\long\def\Czero{C^{0}}
 \global\long\def\Cinfty{C^{\infty}}
\global\long\def\B{\mathcal{B}}
\global\long\def\LamD{\Lambda^{d}T^{*}\B}
\global\long\def\man{M}
\global\long\def\pnt{m}

\global\long\def\push{\sharp}
\global\long\def\pushf{\circ}

\global\long\def\sec#1{\wh#1}
\global\long\def\difo#1{\breve{#1}}
\global\long\def\sdv{\StressDensity}
\global\long\def\s{\zeta}

\global\long\def\inc{\,\lrcorner\,}
\global\long\def\rep#1{\overline{#1}}
\global\long\def\avf{u}
 \global\long\def\inj{\hookrightarrow}
\global\long\def\ext{{\textstyle \Lambda}}
\global\long\def\isom{\cong}
\global\long\def\vs{\mathbf{W}}
\global\long\def\fall{,\quad\text{for all}\quad}
\global\long\def\reals{\mathbb{R}}
\global\long\def\resto#1{|_{#1}}
\global\long\def\tto{\longrightarrow}
\global\long\def\wh#1{\widehat{#1}}
\global\long\def\lmt{\longmapsto}

\global\long\def\diver{\divergence}
\global\long\def\divr{\mathop{}\mathrm{div}}
\global\long\def\stds{\mathfrak{S}}
\global\long\def\trstds{\mathfrak{T}}{} 
\global\long\def\bfcs{\mathbb{G}}
\global\long\def\contr{\mathbb{C}}
\global\long\def\cres{\mathcal{C}}

\title{Stress Theory for Classical Fields}

\author{Raz Kupferman$ {}^* $, Elihu Olami$ {}^* $ }
\address{$ {}^* $Einstein Institute of Mathematics,
The Hebrew University of Jerusalem
Jerusalem, 9190401, Israel}

\author{Reuven Segev$ {}^{**} $}
\address{$ {}^{**} $Department of Mechanical Engineering, Ben-Gurion
University of the Negev, Beer-Sheva 84105, Israel}

\date{\today}
\begin{abstract}

Classical field theories together with the Lagrangian and Eulerian approaches to continuum mechanics are embraced under a geometric setting of a fiber bundle. The base manifold can be either the body manifold of continuum mechanics, space manifold, or space-time. Differentiable sections of the fiber bundle represent configurations of the system and the configuration space containing them is given the structure of an infinite dimensional manifold. Elements of the cotangent bundle of the configuration space are interpreted as generalized forces and a representation theorem implies that there exist a stress object representing forces, non-uniquely. The properties of stresses are studies as well as the role of constitutive relations in the present general setting.

\smallskip
MSC codes: 74A10, 53Z05, 74A60
\bigskip{}
\end{abstract}

\maketitle

\tableofcontents


\section{Introduction}

Physical theories for which the states are represented {by  sections
of a fiber bundle} are predominant in both classical field theories
of theoretical physics and studies of the material structure of bodies
in continuum mechanics. This paper is concerned with the corresponding
stress theory.

For about half a century now, classical fields are modeled mathematically
in the theoretical physics literature as sections of fiber bundles
over space-time. Since the pioneering works on modern formulations
of classical field theories, for example, \cite{Trautman1967,Komorowski1968,Komorowski1969,Sniatycki70,Hermann1970,Krupka1971},
a generic field is viewed as a section $\bk:\base\to\bdl$ of
a fiber bundle $\pr:\bdl\to\base$ for a $d$-dimensional space-time
$\base$. The field equations are obtained by considering stationary
values of an action integral 
\begin{equation}
\int_{\base}\lgr(j^{r}\bk)
\end{equation}
where the Lagrangian function $\lgr:J^{r}Y\to\ext^{d}T^{*}\base$
is a fiber preserving mapping of $r$-jets into the bundle of $d$-alternating
multilinear forms over $\base$. (See, for example, \cite{kijowski1979,deLeon1985,Binz1988,Echeverria1996,Ramond2001,Gotay2003,Giachetta2009,Frankel2012}.)
The variational analysis of the action integral yields terms that
may be interpreted as the components of the stress tensor.

Nontrivial fiber bundles appeared in continuum mechanics originally
in works considering dislocations, and material uniformity and homogeneity.
(See, \cite{Bilby55,Kondo1955,Nol67,Wan67,Bloom1979,epselz}.) The
modern formulations of these theories usually consider sections of
the principal bundle of frames, or moving frames, over the body manifold
as a mathematical model for the distributions of material directions.
Since the sections are defined over the body manifold, no reference
should be made to the physical space and its conceivable Euclidean
structure.

Formulation of continuum mechanics using sections of a general fiber
bundle has an additional advantage. While a major portion of studies
in continuum mechanics use the Lagrangian approach in which material
points serve as fundamental objects, the Eulerian viewpoint may be
advantageous for studies of chemically reacting matter and growing
bodies. The Lagrangian and Eulerian viewpoints are unified when continuum
theories are modeled on fiber bundles. For the Lagrangian formulation,
one simply considers the trivial bundle $\base\times{S}\to\base$
in which ${S}$ is the manifold representing the ambient physical
space. 
In the Eulerian picture, $ \base $ is interpreted as the space manifold or a region therein.

Modern studies of the mathematical structure of stress theory in continuum
mechanics may be traced back to \cite{Noll59} and subsequently \cite{Gurtin1967,Gurtin1975,Silhavy1985,Silhavy1991,Silhavy2008},
for example. The relevance of the notion of stress to field theories
led to contributions from the physics community, for example, \cite[p.~168]{kijowski1979},
which, in some cases, applied ideas originating in the continuum mechanics
research (see \cite[p.~168]{Hehl1976,kijowski1979,Hehl1986,Hehl1991,Hehl1995,Hehl1997}).

The stress object emerges in field theories as {the vertical derivative of the Lagrangian function}. Yet, the studies of the stress object and the field equations it should satisfy are
relevant in the more general situation where a Lagrangian mapping
is not readily available. In the continuum theory of dislocations,
for example, it is hard to expect that the motion of dislocations
will be governed by a potential. 

Thus, this paper considers the stress object and the equations governing
it for fields represented as sections of a general fiber bundle.
Extending the terminology in \cite{TruesdellToupin60}, and in view
of the applications described above, we will use the terminology a
\Emph{classical field theory} to refer to any such  mechanical,
or other physical, theory.

In our approach, the analysis of the stress field is put in a broader (global) context. Extending \cite{Segev1986}, we consider a configuration space of sections which is an infinite dimensional manifold. Generalized forces are viewed as elements of the cotangent bundle of the configuration space and stresses emerge from a representation theorem for the force linear functionals.  This ``weak'' approach allows for generalized, singular, stress fields, with corresponding distributional field equations. We give special attention to the case of smooth stress fields for which  we write down the field equation in an explicit differential form.
For example, a
weak formulation of $p$-form, \cite{Henneaux1986,Henneaux1988},
premetric, \cite{Hehl-Itin06}, electrodynamics was shown in \cite{Segev_e_d_2016}
to follow from stress theory for fields represented by $p$-forms
in the case where the stress object has a particularly simple form. It should be mentioned that we study here the theory concerning the existence of stresses and the equations it satisfies; we do not study the analytic aspects of the field equations obtained after the constitutive relations are used, in tems of existence and uniqueness of solutions, appropriate function spaces, etc.


Following the introduction in Section 2 of the notation and terminology
used pertaining to fiber bundles and their jet bundles, Section 3
is concerned with the infinite dimensional configuration space of
sections. Generalized velocities and generalized forces are modeled
as elements of the tangent and cotangent bundles of the configuration
space, respectively. Particular attention is given to smooth forces,
those given in terms of body forces and surface forces. Section 4
considers the analog of ``local configurations'' as in \cite[pp. 51--52]{Truesdell1965}.
In the present general geometric setting, these are described by sections
of the jet bundle. Next, local velocities and their relation to the
jets of generalized velocity fields are discussed. Stresses are considered
in Section 5. Variational stresses are defined as functionals conjugate
to velocity jets. A representation theorem for generalized forces
in terms of variational stresses relate the two type of objects through
a general version of the principle of virtual power. While, in general,
variational stresses are tensor-valued measures, smooth variational
stresses, those represented by smooth tensor valued densities induce
traction stresses. The traction stress object determines surface forces
on oriented $(d-1)$-submanifolds via a generalization of Cauchy's
formula. A differential operator generalizing the traditional divergence
of the stress tensor is defined next, enabling one to write a generalized
version of differential equations of equilibrium and boundary conditions.
It is observed, that while in the classical formulation of stress
theory, the stress tensor both acts on the rate of change of the deformation
gradient to produce power and determines the traction on hypersurfaces,
in the geometry of general manifolds, two objects, the variational
stress and the traction stresses, are needed for these two roles.
While the variational stress determines a unique traction stress,
a traction stress field do not determine a unique variational stress.
Next, in Section \ref{sec:Constitutive}, loadings and constitutive relations
are introduced leading to a formulation of the problem of stress analysis.
Finally, a number of particular cases are presented in Section \ref{sec:Examples}.
In particular, the relation between stress theory and premetric $p$-form
electrodynamics is summarized.


\section{Preliminaries and Notation}

The fundamental geometric object considered in this work is that of
a fiber bundle, the sections of which are identified with the configurations of a mechanical
system or with classical fields. In this section we introduce the notation and
terminology adopted throughout this paper. 

A fiber bundle \cite{Saunders} will be denoted by a triple $(Y,\pi,\man)$, where $Y$ is the total space, $\man$ is the base manifold and $\pi:Y\to\man$ is the projection.
Let $\man$ be a manifold. We denote by 
\[
(T\man,\tau_{\man},\man)\Textand(T^*\man,\tau_{\man}^*,\man)
\]
its tangent and cotangent bundles. The bundle of $k$-alternating
multilinear forms will be denoted by $(\ext^{k}T^*\man,\tau_{\man}^*,\man)$.

Let $(Y,\pi,\man)$ be a fiber bundle. For a section
$s:\man\to Y$ and a point $m\in \man$, we denote by $s_{m}$,  rather than $s(m)$, the value
of $s$ at $m$. We also denote the fiber of $Y$ at $m$ by $Y_m:=\pi^{-1}(m)$.

Consider $T\pi:TY\to T\man$. The kernel of $T\pi$ in $TY$ is commonly denoted
by $V\pi\subset TY$, and is referred to as the \Emph{vertical sub-bundle} of $TY$. The
set $V\pi$ is the total space of two bundles: the vector bundle 
\[
(V\pi,\tau_{Y}|_{V\pi},Y),
\]
and the fiber bundle 
\[
(V\pi,\pi\circ\tau_{Y}|_{V\pi},\man).
\]

\[
\begin{xy}(20,0)*+{\man}="X";
(0,0)*+{Y}="Y";(20,20)*+{T\man}="TX";
(0,20)*+{TY}="TY";
(-20,20)*+{V\pi}="VY";
{\ar@{->}_{\pi}"Y";"X"};
{\ar@{->}^{T\pi}"TY";"TX"};
{\ar@{->}^{\tau_{\man}}"TX";"X"};{\ar@{->}_{\tau_{Y}}"TY";"Y"};
{\ar@{->}_{\tau_{Y}|_{V\pi}}"VY";"Y"};
{\ar@{^{(}->}^{\text{incl.}}"VY";"TY"};
\end{xy}
\]

The vertical bundle of $Y$ is often denoted in the literature by $VY$ rather than by $V\pi$. The notation $VY$ may be ambiguous in instances where $Y$ is the total space of multiple bundles;  the notation $V\pi$ makes explicit the projection with respect to which verticality is defined. On the other hand, the latter notation is often cumbersome, for example, when the projection is a composition of several projections, some of which restricted to sub-bundles. For improved readability we adopt the following notation scheme: for a fiber bundle $(Y,\pi_Y,\man)$, we denote its vertical bundle by $VY$ in cases where $Y$ is the total space of a single bundle, or, in the case of repeated projections, $Y\to Z\to\cdots \to \man$, in which case verticality is implied relative to the projection onto the manifold $\man$. 

Consider two fiber bundles $(Y,\pi_Y,\man)$ and $(Z,\pi_Z,\man)$ over
the same base manifold $\man$. Let $\vp:Y\to Z$ be a fiber bundle morphism, i.e., $\pi_Z\circ\vp = \pi_Y$.
The restriction of the tangent map $T\vp:TY\to TZ$ to $VY$ defines
a vertical bundle morphism, 
\[
T\vp|_{VY}:VY\tto VZ.
\]

\[
\begin{xy}
(0,0)*+{\man} = "X";%
(-15,15)*+{Y} = "Y";
(15,15)*+{Z} = "VY";%
(-15,30)*+{VY} = "J1Y";%
(15,30)*+{VZ} = "VJ1Y";%
{\ar@{->}_{\pi_Y} "Y"; "X"};%
{\ar@{->}^{\pi_Z} "VY"; "X"};%
{\ar@{->}^{\vp} "Y"; "VY"};%
{\ar@{->}_{\tau_Y|_{VY}} "J1Y"; "Y"};%
{\ar@{->}^{\tau_Z|_{VZ}} "VJ1Y"; "VY"};%
{\ar@{->}^{T\vp|_{VY}} "J1Y"; "VJ1Y"};%
\end{xy} 
\]

Let $(Y,\pi,N)$ be a fiber bundle and let $f:\man\to N$ be a differentiable
mapping. One has the natural pullback bundle, $(f^*Y,f^*\pi,\man)$
with $(f^*Y)_{m}$ canonically identified with $Y_{f(m)}$ via the bundle morphism $\pi^*f:f^*Y\to Y$ over $f$,
as in the  diagram below. A section $s:N\to Y$ induces the section
$f^*s:\man\to f^*Y$ satisfying $\pi^*f\circ f^*s=s\circ f$;  in other words,  $(f^*s)_m$ is identified with $s_{f(m)}$.

\[
\begin{xy}
(-15,15)*+{\man} = "Y";
(15,15)*+{N} = "VY";%
(-15,30)*+{f^*Y} = "J1Y";%
(15,30)*+{Y} = "VJ1Y";%
{\ar@{->}^{f} "Y"; "VY"};%
{\ar@{->}_{f^*\pi} "J1Y"; "Y"};%
{\ar@{->}^{\pi} "VJ1Y"; "VY"};%
{\ar@{->}^{\pi^*f} "J1Y"; "VJ1Y"};%
{\ar@{->}@/^{2pc}/^{{f^*s}} "Y"; "J1Y"};
{\ar@{->}@/_{2pc}/_{{s}} "VY"; "VJ1Y"};
\end{xy} 
\]

The tangent map of $f$, $Tf:T\man\to TN$ induces the differential of $f$, $df: T\man\to f^*TN$ satisfying $\tau_N^*f\circ df=Tf$.

\[
\begin{xy}
(0,0)*+{\man} = "M";
(40,0)*+{N} = "N";
(40,15)*+{TN} ="TN";
(-10,15)*+{T\man} = "TM";
(15,15)*+{f^*TN} ="fTN";
{\ar@{->}_{f} "M"; "N"};
{\ar@{->}^{\tau_\man} "TM"; "M"};
{\ar@{->}^{f^*\tau_N} "fTN"; "M"};
{\ar@{->}_{\tau_N} "TN"; "N"};
{\ar@{->}@/^{2pc}/^{Tf} "TM"; "TN"};
{\ar@{->}^{df} "TM"; "fTN"};
{\ar@{->}^{\tau_N^*f} "fTN"; "TN"};

\end{xy}
\]

In particular, for two fiber bundles $(Y,\pi,\man)$ and $(Z,\rho,\man)$ over
the same base manifold $\man$, and a fiber bundle morphism $\vp:Y\to Z$,
 we use 
\begin{equation}
d\vp|_{VY}:VY\tto \vp^*VZ
\label{eq:vert_d}
\end{equation}
to denote the vertical derivative of $\vp$, i.e., the restriction of $d\vp$ to $VY$; this vertical derivative is sometimes denoted in the literature by $\delta\vp$.

Let $(Y,\pi_Y,N)$ and $(Z,\pi_Z,N)$ be fiber bundles over $N$. Let $f:\man\to N$ and let $\vp:Y\to Z$ be a fiber bundle morphism. Then, $f$ induces a fiber bundle morphism
$f^*\vp:f^*Y\to f^*Z$, defined by the equality $\pi_Z^*f\circ f^*\vp=\vp\circ \pi_Y^*f$.

 \[
\begin{xy}
(-28,20)*+{\man} = "1x1";
(12,20)*+{N} = "2x1";%
(-45,32)*+{f^*Y} = "1x2";%
(-5,32)*+{Y} = "2x2";%
(-15,40)*+{f^*Z} = "1x3";
(25,40)*+{Z} = "2x3";%
{\ar@{<-}^{f} "2x1"; "1x1"};%
{\ar@{->}_{\pi_Y} "2x2"; "2x1"};%
{\ar@{->}^{\pi_Z} "2x3"; "2x1"};%
{\ar@{->}_{f^*\pi_Y} "1x2"; "1x1"};%
{\ar@{-->}^{f^*\pi_Z} "1x3"; "1x1"};%
{\ar@{<-}_{\vp} "2x3"; "2x2"};%
{\ar@{->}^{\pi_Y^*f} "1x2"; "2x2"};%
{\ar@{<-}_{f^*\vp} "1x3"; "1x2"};%
{\ar@{->}^{\pi_Z^*f} "1x3"; "2x3"};%
\end{xy} 
\]

{For a manifold $\man$, $\scrD(\man)$ denotes the space of  smooth real-valued functions on $\man$.} For a fiber bundle $(Y,\pi,\man)$,
it is common to denote by $C^{k}(\pi)$ the set of $C^{k}$-sections $\man\to Y$.
As in the case of the vertical bundle, we note that the space of $C^{k}$-sections $\man\to Y$ 
 is often denoted in the literature by $C^{k}(Y)$. Here too, we adopt the following notation scheme: for a fiber bundle $(Y,\pi_Y,\man)$, we denote its $C^{k}$-sections by $C^{k}(Y)$ in cases where $Y$ is the total space of a single bundle, or, in the case of repeated projections, $Y\to Z \to\cdots \to \man$, when the section is with respect to projection onto the manifold $\man$. 

Consider two fiber bundles $(Y,\pi_Y,\man)$ and $(Z,\pi_Z,\man)$ and let $\vp:Y\to Z$ be a fiber bundle morphism. Then $\vp$ induces a map between sections, 
\[
C^{k}(Y)\ni s\lmt \vp\circ s\in C^{k}(Z).
\]
This mapping is often denoted by $\vp_{*}:C^{k}(Y)\to C^{k}(Z)$, however, 
we will  write explicitly either $\vp\pushf s$ or just $\vp(s)$. 

\[
\begin{xy}
(0,0)*+{\man}="X";(-15,20)*+{Y}="Y";
(15,20)*+{Z}="Z";{\ar@{->}_{\pi_Y}"Y";"X"};
{\ar@{->}^{\pi_Z}"Z";"X"};
{\ar@{->}^{\vp}"Y";"Z"};
{\ar@{->}@/^{2pc}/^{{s}}"X";"Y"};
{\ar@{->}@/_{2pc}/_{{\vp\pushf{}s}}"X";"Z"};\end{xy}
\]

Let $(Y,\pi,\man)$ be a fiber bundle and let $m\in\man$. We say that two (local) sections $s,u\in \Cone(Y)$ are \Emph{$1$-equivalent} at $m$ if $(Ts)_m=(Tu)_m$. Equivalently, $s$ and $u$ are $1$-equivalent if and only if they assume at $m$ the same values and the same first derivatives in some (hence, any) coordinate system.  The $1$-equivalence class at $m$ of a (local) section $s$ is denoted by $j^1_ms$. 
The \Emph{first jet bundle} of $\pi$ is the set 
\[
J^1\pi=\left\{ j^1_ms\,|\,m\in\man,\,\, \,s\,\,\text{is a local}\,\,\Cone\text{-section at }m    \right\}.
\]

In analogy with vertical bundles and $C^k$ sections, we adopt the following notation scheme: for a fiber bundle $(Y,\pi_Y,\man)$, we denote its jet bundle by $J^1Y$ in cases where $Y$ is the total space of a single bundle, or, in the case of repeated projections, $Y\to Z \to\cdots \to \man$, when the sections are  with respect to projection onto the manifold $\man$. 

The first jet bundle of $(Y,\pi,\man)$ is associated with the following projections:

\[
\begin{xy}
(0,20)*+{J^1Y}="J1Y";(30,20)*+{Y}="Y";
(0,0)*+{\man}="X";{\ar@{->}^{\pi}"Y";"X"};
{\ar@{->}^{\pi^{1,0}}"J1Y";"Y"};
{\ar@{->}_{\pi^1}"J1Y";"X"};
\end{xy}
\]

Consider once again two fiber bundles $(Y,\pi,\man)$ and $(Z,\rho,\man)$ over
the same base manifold $\man$. Let $\vp:Y\to Z$ be a fiber bundle morphism.
The first \Emph{jet map} of $\vp$ is a fiber bundle morphism 
\[
j^1\vp:J^1Y\tto J^1Z,
\]
defined by 
\[
j^1\vp(j_{m}^1s)=j_{m}^1(\vp\pushf{}s),
\]
where for $m\in \man$, $s$ is a local section of $Y$ at $m$. 

\[
\begin{xy}
(0,0)*+{\man}="X";
(-15,15)*+{Y}="Y";(15,15)*+{Z}="VY";
(-15,30)*+{J^1Y}="J1Y";
(15,30)*+{J^1Z}="VJ1Y";
{\ar@{->}_{\pi}"Y";"X"};{\ar@{->}^{\rho}"VY";"X"};
{\ar@{->}^{\vp}"Y";"VY"};
{\ar@{->}_{\pi^{1,0}}"J1Y";"Y"};
{\ar@{->}^{\rho^{1,0}}"VJ1Y";"VY"};
{\ar@{->}^{j^1\vp}"J1Y";"VJ1Y"};
\end{xy}
\]


Let $Y$ and $Z$ be vector bundles over a manifold $\man$. We denote
by 
\[
\Hom(Y,Z)\simeq Y^*\otimes Z
\]
the vector bundle over $\man$ whose elements
at $m\in \man$ are linear maps $Y_{m}\to Z_{m}$. Using $L(Y,Z)$ to
designate the set of vector bundle morphisms $Y\to Z$, it is observed
that a vector bundle morphism $\xi\in L(Y,Z)$ can be identified with a section
of $\Hom(Y,Z)$; thus, a vector bundle morphism may be viewed as a
tensor field over $\man$. For
a section $\xi$ of $\Hom(Y,Z)$ and a section $\eta:\man\to Y$, we have
the section $\xi\pushf{}\eta:\man\to Z$, defined by $(\xi\pushf{}\eta)_{m}=\xi_{m}(\eta_{m})$.

Let $s\in\Omega^1(N)$ be a one-form and let $f:\man\to N$ be a mapping.   Then,
$f^*s$ is a section of $(f^*T^*N,f^*\tau_{N}^*,\man)$;
it is not a differential form. In contrast, we denote by $f^{\push}s$
the one-form over $\man$ defined by 
\begin{equation}
f^{\push}s(v)=f^*s(df(v)),
\label{eq:fsharp}
\end{equation}
for every $v\in T\man$. 

\begin{center}
\rule[0.5ex]{0.5\columnwidth}{0.5pt}
\par\end{center}

Throughout this paper we adopt the following terms and notation:


\begin{center}
\scriptsize  %
\begin{longtable}{llll}
\hline 
Objects  & Elements of  & Notation  \tabularnewline
\hline 
\hline 
Points in the field (base manifold) & $\base$ & $p$ \tabularnewline
Values of a field & $Y$ & $e$ \tabularnewline
Configurations  & $\Q\subset\Cone(Y)$  & $\bk,\bk_{1},\dots$  \tabularnewline
Virtual velocities  & $(T\Q,\tau_{\Q},\Q)$  & \tabularnewline
Velocities at $\bk$ & $T_{\bk}\Q\simeq\Cone(\bk^*VY)$  & $v,w,\dots$  \tabularnewline
Generalized forces  & $(T^*\Q,\tau_{\Q}^*,\Q)$  &    \tabularnewline
Forces  at $\bk$ & $T_{\bk}^*\Q\simeq(\Cone(\bk^*VY))^*$  & $f$   \tabularnewline
Body force densities  & $\vals B = \Hom(VY,\pi^*\LamD)$  &    \tabularnewline
Body force density field at $\bk$  & $\Czero(\bk^*\vals B)$  & $\ForceDensity$   \tabularnewline
Surface force densities  & $\vals T = \Hom(VY|_{\partial\base},(\pi|_{\partial\base})^*\ext^{d-1}T^*\partial\base)$  &    \tabularnewline
Surface force density field at $\bk$  & $\Czero(\bk_{\partial\base}^*\vals T)$  & $\SurfaceForceDensity$  \tabularnewline
Deformation jets  & $\strain\subset\Czero(J^1Y)$  & $\xi$  \tabularnewline
Velocity jets  & $(T\strain,\tau_{\strain},\strain)$  &   \tabularnewline
Velocity jets at $\xi$  & $T_{\xi}\strain\simeq\Czero(\xi^*VJ^1Y)$  & $\eta$   \tabularnewline
Variational stresses  & $(T^*\strain,\tau_{\strain}^*,\strain)$  &   \tabularnewline
Variational stresses at $\xi$ & $T_{\xi}^*\strain\simeq(\Czero(\xi^*VJ^1Y))^*$  & $\sigma$   \tabularnewline
Variational stress densities & $\stds = \Hom(VJ^1Y,(\pi^1)^*\LamD)$  &    \tabularnewline
Variational stress density fields at $\xi$ & $C^0(\xi^*\stds)$  & 
$\StressDensity$  \tabularnewline
Traction stress densities & $\trstds = \Hom(VY,\pi^*\ext^{d-1}T^*\base)$ & \tabularnewline
Traction stress density fields at $\bk$ & $\Czero(\bk^*\trstds)$ & $\trst$ \tabularnewline
Loadings  & $\Czero(\tau_{\Q}^*)$  & $F$  \tabularnewline
Body loading densities  & $\Czero(\Hom(VY,\pi^*\LamD))$  & $\LoadingDensity$  \tabularnewline
Surface loading densities  & $\Czero(\Hom(VY\resto{\partial\base},(\pi\resto{\partial\base})^*\ext^{d-1}T^*\partial\base))$  & $\SurfaceLoadingDensity$  \tabularnewline
Loading potentials  & $\Cone(\Q)$  & $W$  \tabularnewline
Body loading potential densities  & $\Cone(\pi^*\LamD)$  & $w_\base$  \tabularnewline
Boundary loading potential densities  & $\Cone((\pi\resto{\partial\base})^*\ext^{d-1}T^*\partial\base)$  & $w_{\partial\base}$  \tabularnewline
Constitutive relations  & $\Czero(\tau_{\strain}^*)$  & $\Psi$  \tabularnewline
Constitutive densities & $\Czero(\Hom(VJ^1Y,(\pi^1)^*\LamD))$  & $\ConstitutiveDensity$   \tabularnewline
Elastic energy  & $\Cone(\strain)$  & $U$  \tabularnewline
Elastic energy density  & $C^2((\pi^1)^*\LamD)$  & $\calL$  \tabularnewline
\hline
\end{longtable}
\par\end{center}


\section{Configurations, Velocities and Forces}

\subsection{The manifold of configurations}

In the global approach to mechanics, a system is characterized by
its configuration space. 

The fundamental object in a classical field theory is a fiber bundle, in which the various fields assume their
values. The $d$-dimensional base manifold $\base$ typically represents
space-time in modern physical field theories and a body manifold in continuum mechanics. 
For $p\in\base$,
the $m$-dimensional fiber
$Y_p$ represents the values that the field
may assume at $p$. Thus, a field theory is characterized
by a particular fiber bundle. 

\begin{defn}
\label{def:manifold_of_configurations} 
Let $(Y,\pi,\base)$ be a fiber bundle, where the base manifold $\base$ is assumed to be compact, oriented and possibly having a boundary. 
Consider  the Banach manifold
\cite{Palais68} of sections
$\Cone(Y)$. The \Emph{manifold
of configurations}, $\Q$, is an open subset of $\Cone(Y)$.
\end{defn}

Since $\Q$ is open in $\Cone(Y)$, it inherits its Banach manifold structure; moreover, for every $\bk\in\Q$, $T_\bk\Q=T_\bk \Cone(Y)$.

\begin{cmnt}
 A basic example of a manifold of configurations is the case where $Y$ is a trivial bundle. In the
Lagrangian approach to continuum mechanics, a \Emph{body} is modeled
as a smooth, compact, $d$-dimensional differentiable manifold, $\base$. The ambient \Emph{space} is modeled as a smooth $m$-dimensional differentiable manifold without
boundary, $S$.
The space of configurations is the space of $\Cone$-embeddings $\base\to S$, which
can be given the structure of a smooth Banach manifold \cite{Palais68,michor80}.
We may also view such maps as sections, $\Cone(Y)$, where 
\[
Y=\base\times S
\]
is a trivial bundle over $\base$.
\end{cmnt}

A construction of the manifold $\Cone(Y)$ consistent with the Whitney
$\Cone$-topology \cite{michor80} can be found in Palais \cite{Palais68}.
The construction may be roughly described as follows. Let $\bk\in\Q$
and let $\Cone(\bk^*VY)$ be the Banachable space
of sections $\base\to VY$ along $\bk$. That is, $w\in\Cone(\bk^*VY)$
satisfies for $p\in\base$
\[
w_p\in(\bk^*VY)_p = (VY)_{\bk_p}.
\]
A local chart for $\Q$ in a neighborhood of $\bk$ is a map 
\[
\chi_\bk:\Cone(\bk^*VY)\to\Q,
\]
given by 
\[
(\chi_\bk(w))_p = \exp^{Y}(w_p),
\]
where $\exp^{Y}:VY\to Y$ is the exponential map on $Y$ induced by some arbitrarily chosen spray, consistent
with the fiber structure. Namely, for $e\in Y$,
\[
\exp^{Y}:(VY)_e\to Y_{\pi(e)}.
\]
Strictly speaking, $\exp^Y$ is only defined on a neighborhood of the zero section of $VY$. However, one can always reparametrize $\exp^Y$ to be defined globally on $VY$.
The above construction holds only for the case of a compact base manifold $\base$. See Remark \ref{rem:nonc} below for the case of non-compact bases.

\InCoord{ 
Throughout this paper, we complement the covariant, coordinate-free
formulation with its corresponding local coordinate representation.
We will use a typical local coordinate system
\[
X=(X^1,\dots,X^{d}):p\in\base\lmt X(p)\in\R^{d}
\]
for the base manifold $\base$, and a local coordinate system
\[
y=(X,x)=(X^1,\dots,X^{d},x^1,\dots,x^m):e\in Y \lmt y(e)\in\R^{d}\times\R^m
\]
for the fiber bundle $Y$. The components of $X$ for a given chart
will be denoted with Greek indexes, \eg, $X^\alpha$; the components
of $x$ will be denoted with Roman indexes, \eg, $x^i$. Note the
abuse of notation where $X$ is both a function on $\base$ and a function
on $Y$; this type of abuse will recur in several instances below.

The coordinate system $y$ is assumed to be adapted to the bundle
structure: for every $e\in Y$, \emph{i.e.},
\[
X(e)=X(\pi(e)).
\]
}

\subsection{Generalized velocities}

\begin{defn}
\label{def:bundle_of_velocities} The bundle $(T\Q,\tau_{\Q},\Q)$
tangent to the manifold of configurations is termed the \Emph{bundle
of generalized velocities}, or the \Emph{bundle of virtual displacements}. 
\end{defn}

Following Lang \cite[p.~26]{Lang72}, we define the tangent space of an 
infinite-dimensional manifold in the following way.
Let $\bk_1$ and $\bk_2$ be two
configurations, and let 
\[
\chi_1:\Cone(\bk_1^*VY)\to\Q\Textand\chi_2:\Cone(\bk_2^*VY)\to\Q
\]
be coordinate systems at $\bk_1$ and $\bk_2$, respectively, whose images overlap. We will keep the simple notation $ \chi_2^{-1} $ for the restriction to the overlap.
Let 
\[
\bk=\chi_1(w_1)=\chi_2(w_2).
\]
The triples 
\[
(\bk,\chi_1,v_1)\Textand(\bk,\chi_2,v_2),
\]
where $v_1\in\Cone(\bk_1^*VY)$ and $v_2\in\Cone(\bk_2^*VY)$,
are considered equivalent if 
\[
v_2=D_{w_1}(\chi_2^{-1}\circ\chi_1)(v_1).
\]
The collection of all such equivalence classes for a fixed $\bk$
forms the vector space $T_\bk\Q$. Given an exponential map on $Y$,
the canonical representative of an element of $T_\bk\Q$ is of the
form 
\[
(\bk,\chi_\bk,v),\qquad v\in\Cone(\bk^*VY),
\]
that is, the vector space of virtual velocities at $\bk$ can be identified with the space of velocity fields at $\bk$,
\begin{equation}
T_\bk\Q\simeq  \Cone(\bk^*VY).
\label{eq:canon2}
\end{equation}

A velocity field at $\bk$, $v\in \Cone(\bk^*VY)$ can be identified with
a path $\gamma:I\to\Q$ in a canonical way, 
\[
\gamma(t)=\chi_\bk(tv),
\]
\ie, 
\[
(\gamma(t))_p=\exp^{Y}(t\,v_p)\in Y_p.
\]

The bundle of velocities $T\Q$ is the union of the tangent spaces $T_\bk\Q$
with the standard smooth structure.
As a set,   
$T\Q$ consists of sections of $VY$ viewed as a fiber bundle over $\base$. The tangent space at $\bk\in\Q$ consists of those sections
whose projection onto $Y$ coincides with $\bk$.
In
the physics context, 
an element of $T_\bk\Q$ is interpreted as a fiber-wise variation
of $\bk$.

\[
\begin{xy}
(0,0)*+{\base}="B";(30,0)*+{Y}="Y";
(0,20)*+{\bk^*VY}="sVY";
(30,20)*+{VY}="VY";{\ar@{->}^{\bk}"B";"Y"};
{\ar@{->}_{\bk^*\tau_{Y}|_{VY}}"sVY";"B"};
{\ar@{->}^{\tau_{Y}|_{VY}}"VY";"Y"};
{\ar@{->}^{\tau_Y^*\bk}"sVY";"VY"};
{\ar@{->}@/^{1pc}/^{\pi}"Y";"B"};
{\ar@{-->}@/^{4pc}/^{{v}}"B";"sVY"};\end{xy}
\]

\InCoord{
The local coordinate system $X$ on $ \base$ induces a smooth local frame for $T\base$,
\[
\{\parB_\alpha:=\frac{\partial}{\partial X^\alpha}~:~\alpha=1,\dots,d\},
\]
defined by the paths 
\[
\parB_\alpha|_p=[t\mapsto X^{-1}(X(p)+t\,e_\alpha)],\qquad p\in\base,
\]
where $\{e_\alpha\}$ is the standard basis of $\R^{d}$. The action of this frame on a function
$f\in \scrD(\base)$ is
\[
(\parB_\alpha f)(p) = D_{X(p)}^\alpha(f\circ X^{-1}).
\]

Likewise, the local coordinate system $y=(X,x):Y\to\R^n$ induces a smooth local frame for $TY$,
\[
\{\parY_i,\parY_\alpha ~:~ \alpha=1,\dots,d,\,\,i=1,\dots,m\},
\]
defined by the paths 
\[
\begin{split}
\parY_\alpha|_e &= [t\mapsto y^{-1}(X(e)+t\,e_\alpha,x(e))]\\
\parY_i|_e & =[t\mapsto y^{-1}(X(e),x(e)+t\,e_i)],\qquad e\in Y.
\end{split}
\]
In terms of derivations, for $f\in \scrD(Y)$ and $e\in Y$,
\[
\begin{aligned}
(\parY_\alpha f)(e) & =D_{y(e)}^\alpha(f\circ y^{-1}),\qquad\alpha=1,\dots,d\\
(\parY_if)(e) & =D_{y(e)}^i(f\circ y^{-1}),\qquad i=1,\dots,m.
\end{aligned}
\]

Since the coordinate systems are adapted, 
\begin{equation}
d\pi\circ\parY_\alpha=\pi^*\,\parB_\alpha\Textand d\pi\circ\parY_i=0.
\label{eq:useful1}
\end{equation}
The vertical bundle $VY$ is spanned locally by the frame field
$\parY_i$; a local coordinate system for $VY$ is 
\[
(X,x,\dot{x}): v\in VY\lmt(X,x,\dot{x})(v)\in\R^{d}\times\R^m\times\R^m,
\]
where 
\[
X^\alpha\brk{v^i\,\parY_i|_e}=X^\alpha(e),\qquad x^j\brk{v^i\,\parY_i|_e}=x^j(e)
\textand
\dot{x}^j\brk{v^i\,\parY_i|_e}=v^j.
\]
A generalized velocity (or virtual displacement) field at $\bk$ has a local representation
\[
v=v^i\,\bk^*\parY_i,
\]
where  $v^i\in \Cone(\base)$.

We denote by $\dxB^\alpha$ and $\dxY^\alpha$, $\dxY^i$ the
co-frames dual to $\parB_\alpha$, $\parY_\alpha$ and $\parY_i$.
We have 
\begin{equation}
\dxY^\alpha=\pi^{\push}\dxB^\alpha,\label{eq:useful2}
\end{equation}
because by \eqref{eq:fsharp} and \eqref{eq:useful1} 
\[
\pi^{\push}\dxB^\alpha(\parY_{\beta})=\pi^*\dxB^\alpha(d\pi\circ\parY_{\beta})=\pi^*\dxB^\alpha(\pi^*\,\parB_{\beta})=\delta_{\beta}^\alpha
\]
and 
\[
\pi^{\push}\dxB^\alpha(\parY_i)=\pi^*\dxB^\alpha(d\pi\circ\parY_i)=0.
\]
} 

\subsection{Generalized forces}

\begin{defn}
\label{def:bundle_of_forces} 
The \Emph{bundle of forces} is the vector bundle 
\[
(T^*\Q,\tau^*_{\Q},\Q)
\]
dual to the vector bundle of velocities. 
A \Emph{generalized force} at $\bk$ is an element
\[
f\in T_\bk^*\Q.
\]
\end{defn}

That is, for every configuration $\bk\in\Q$, the vector space $T_\bk^*\Q$
of forces at $\bk$ is the dual of the vector space $T_\bk\Q$ of
velocities at $\bk$. The action of a force $f$ at $\bk$ on a velocity
$v$ at $\bk$ yields a real number, $f(v)$, termed the \Emph{virtual
power}, or \Emph{virtual work} that $f$ expends on $v$.

By the isomorphism \eqref{eq:canon2}, 
\begin{equation}\label{iso10}
T_\bk^*\Q\simeq(\Cone(\bk^*VY))^*.
\end{equation}

Generally, forces may be represented by a collection of
measures (see Subsection \ref{sec5p1}) and therefore cannot be assigned values at points. The remaining part of this section considers forces that
can be represented by more regular fields.

\begin{defn}
The bundle of \Emph{body force densities} is 
\[
\vals B :=\Hom(VY,\pi^*\ext^{d}T^*\base).
\]
It is a vector bundle over $Y$, with projection which we denote by $\pi_{\vals B}:\vals B\to Y$.
For $\bk\in\Q$,
\[
\bk^*\vals B =
\Hom(\bk^*VY,\ext^{d}T^*\base)
\]
is the bundle of body force densities along $\bk$; it is a vector bundle over $\base$.

The bundle of \Emph{surface force densities} is 
\[
\vals T:=\Hom(VY|_{\partial\base},(\pi|_{\partial\base})^*\ext^{d-1}T^*\partial\base).
\]
It is a vector bundle over $Y|_{\partial\base}$, with projection which we denote by $\pi_{\vals T}$.
For $\bk\in\Q$, and $\bk_{\partial\base}:=\bk\resto{\partial\base}$,
\[
\bk_{\partial\base}^*\vals T = 
 \Hom(\bk^*VY|_{\partial\base},\ext^{d-1}T^*\partial\base)
\]
is the bundle of surface force densities along $\bk_{\partial\base}$; it is a vector bundle over $\partial\base$. 
\end{defn}

With these definitions, we can define the notion of a continuous force:

\begin{defn}
\label{def:smooth_force} 
A force $f$ at $\bk$ is termed \Emph{continuous}
if there exists a \Emph{body force density field} along $\bk$,
\[
\ForceDensity\in \Czero(\bk^*\vals B)  \simeq \Czero(\Hom(\bk^*VY,\LamD)),
\]
and a \Emph{surface force density field} along $\bk$, 
\[
\SurfaceForceDensity\in \Czero(\bk_{\partial\base}^*\vals T) \simeq \Czero(\Hom(\bk^*VY|_{\partial\base},\ext^{d-1}T^*\partial\base)),
\]
such that for every velocity $v$ at $\bk$, the virtual power that
$f$ expends on $v$ is given by 
\begin{equation}
f(v)=\int_{\base}\ForceDensity\pushf v+\int_{\partial\base}\SurfaceForceDensity\pushf v|_{\partial\base}.\label{eq:smooth_force}
\end{equation}
Note that on the left-hand side of \eqref{eq:smooth_force}, 
$v$ is viewed as an element of $T_\bk\Q$, whereas on the right-hand side,
$v$ is viewed as a velocity field at $\bk$, i.e., an element of $\Cone(\bk^*VY)$.
\end{defn}

\begin{remark}[\textbf{Non-compact base manifolds}]
\label{rem:nonc}
If the base manifold $\B$ is not compact, the image of a section $s\in C^1(Y)$ is never compact, hence, it is not possible to endow $C^1(E)$ (or any other reasonable class of sections) with a Banach manifold structure (see discussion in the introduction of \cite{TaP01}). However, as shown
in \cite{michor80} and \cite{Krm97} for smooth sections, the space $C^\infty(Y)$ of $C^\infty$-sections $\B\to Y$ can be given a structure of a smooth manifold modeled on a locally convex topological vector space. The tangent space at a configuration $\bk\in C^\infty(Y)$ may be identified with the space $C_c^\infty(\bk^*VY)$ of differentiable sections with compact supports, equipped with the inductive limit topology 
\[
C_c^\infty(\bk^*VY)=\lim_K C_K^\infty(\bk^*VY)
\]
where $K$ runs through all compact sets in $\B$ and $C_K^\infty(\bk^*VY)$ has the topology of uniform convergence of all derivatives \cite[Theorem 42.1]{Krm97}. Thus, generalized velocities are represented by sections of $\bk^*VY $ having compact supports so that forces may be viewed as tensor valued currents or generalized sections (see \cite{GuillAndStrenberg,Segev_e_d_2016}).
%
An analogous construction can be applied to $C^1$-sections.

%
%

\end{remark}

\section{Deformation Jets and Velocity Jets}

\subsection{The manifold of deformation jets}

In this section, we consider what is termed in \cite{Segev1981} the local model\textemdash a
notion of configuration that encodes also information about local
deformations. In a first grade theory, this additional information
is reflected by conceivable values of the first derivative of the configuration
 at the various points. These values of the derivatives need
not be compatible with a particular configuration $\bk$. 
For short, we refer to such fields as deformation jets (referred to in \cite{Segev1981}, for the case of a trivial
bundle, as local configurations after \cite[pp. 51--52]{Truesdell1965}). 
The natural geometric construct for encoding this
information is the first jet bundle $(J^1Y,\pi^1,\base)$. 

\begin{defn}
\label{def:manifold_of_strains} 
The \Emph{manifold of deformation jets} which we denote by $\strain$, is some open subset
of $\Czero(J^1Y)$---the space of $\Czero$-sections $\base\to J^1Y$ containing the image of $j^1:\Q\to C^0(J^1Y)$. That is 
\[
j^1(\Q)\subset \strain\subset C^0(J^1Y).
\]

 A deformation jet which is a jet extension $j^1\bk\in\strain$ of a configuration $\bk\in\Q$ is termed \Emph{compatible} or \Emph{holonomic}. 
\end{defn}

The manifold structure of $\strain$ is analogous to the manifold
structure of $\Q$. 
 Denote the vertical bundle of $J^1Y$ by 
\[
VJ^1Y=\ker T\pi^1\subset TJ^1Y.
\]

\[
\begin{xy}
(20,0)*+{\base}="X";
(0,0)*+{J^1Y}="Y";
(20,20)*+{T\base}="TX";
(0,20)*+{TJ^1Y}="TY";
(-20,20)*+{VJ^1Y}="VY";
{\ar@{->}_{\pi^1}"Y";"X"};
{\ar@{->}^{T\pi^1}"TY";"TX"};
{\ar@{->}^{\tau_{\base}}"TX";"X"};
{\ar@{->}_{\tau_{J^1Y}}"TY";"Y"};
{\ar@{->}_{\tau_{J^1Y}|_{VJ^1Y}}"VY";"Y"};
{\ar@{^{(}->}^{\text{incl.}}"VY";"TY"};
\end{xy}
\]

For $\xi\in\strain$, the modeling space for a chart in a neighborhood
of $\xi$ is the Banachable space $\Czero(\xi^*VJ^1Y)$
of sections $\base\to VJ^1Y$ along $\xi$; that is, $\xi'$ in that
neighborhood of $\xi$ is represented by an element $\eta\in\Czero(\xi^*VJ^1Y)$
satisfying 
\[
\eta_p\in(\xi^*VJ^1Y)_p\simeq(VJ^1Y)_{\xi_p}\subset T_{\xi_p}J^1Y.
\]

\[
\begin{xy}
(0,0)*+{\base}="X";
(30,0)*+{J^1Y}="J1Y";
(30,20)*+{VJ^1Y}="VJY";
(0,20)*+{\xi^*VJ^1Y}="xiVJY";
{\ar@{->}_{\xi}"X";"J1Y"};
{\ar@{->}^{\tau_{J^1Y}|_{VJ^1Y}}"VJY";"J1Y"};
{\ar@{->}^{(\tau_{J^1Y}\resto{VJ^1Y})^*\xi}"xiVJY";"VJY"};
{\ar@{->}_{\xi^*(\tau_{J^1Y}|_{VJ^1Y})}"xiVJY";"X"};
{\ar@{-->}@/^{5pc}/^{{\eta}}"X";"xiVJY"};
\end{xy}
\]

\InCoord{ 
A local coordinate system for $J^1Y$ \cite[p. 96]{Saunders} is 
\[
z=(y,x')=(X,x,x'):j_p^1\bk\in J^1Y\lmt z(j_p^1\bk)\in\R^{d}\times\R^{m}\times\R^{d\times m},
\]
such that 
\[
y(j_p^1\bk)=y(\bk_p)\Textand{x'}_\alpha^i(j_p^1\bk)=\parB_\alpha(x^i\circ\bk)(p).
\]
} 

\subsection{The bundle of velocity jets}

The analog of a generalized velocity for the case of the bundle
of deformation jets is defined as follows:

\begin{defn}
\label{def:bundle_of_strain_rates} The vector bundle $T\strain$
tangent to the manifold of deformation jets $\strain$ will be termed the \Emph{bundle
of velocity jets}. In \cite{Segev1981} it is
referred to as the bundle of local virtual displacements
(again, in the context of a trivial bundle). 
\end{defn}
The construction of $T\strain$ is analogous to the construction of
the bundle of velocities $T\Q$. In particular, for $\xi\in\strain$,
\begin{equation}
T_{\xi}\strain\simeq\Czero(\xi^*VJ^1Y),\label{eq:iso4}
\end{equation}
which is the space of sections $\base\to VJ^1Y$ along $\xi$. 

As a set,   
$T\strain$ consists of sections of $VJ^1Y$ viewed as a fiber bundle over $\base$. The tangent space at $\xi\in\strain$ consists of those sections
whose projection onto $J^1Y$ coincides with $\xi$.
In
the physics context, 
an element of $T_\xi\strain$ is interpreted as a fiber-wise variation
of $\xi$.

For later use, we will need the following well-known isomorphism:
\begin{prop}
\label{prop:isoSaunders} $VJ^1Y$ is isomorphic to $J^1VY$,
\[
(VJ^1Y,\pi^1\circ\tau_{J^1Y}|_{VJ^1Y},\base)\simeq(J^1VY,(\pi\circ\tau_{Y}|_{VY})^1,\base).
\]
We will denote the isomorphism by 
\[
K:VJ^1Y\tto J^1VY.
\]
\end{prop}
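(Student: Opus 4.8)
The statement is the classical fact that ``vertical'' and ``jet'' functors commute, and the natural way to establish it is to construct $K$ explicitly in adapted coordinates, show it is well-defined (independent of the choice of coordinates), and verify it is a fiber bundle isomorphism over $\base$ intertwining the two projections onto $\base$. First I would fix an adapted coordinate system $(X,x)$ on $Y$, inducing the coordinates $z=(X,x,x')$ on $J^1Y$ as in the \texttt{InCoord} block above, and the coordinates $(X,x,\dot x)$ on $VY$. An element of $V J^1 Y$ over a point with coordinates $(X,x,x')$ is then a tangent vector to $J^1Y$ killed by $T\pi^1$, so it has the coordinate description $(X,x,x';\dot x,\dot x')$ with $\dot x\in\R^m$, $\dot x'\in\R^{d\times m}$ and no $\dot X$ component. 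On the other side, applying the jet functor to the bundle $(VY,\pi\circ\tau_Y|_{VY},\base)$ with fiber coordinates $(X,x,\dot x)$ produces $J^1 VY$ with coordinates $(X,x,\dot x; x', \dot x')$, where $x'^i_\alpha=\parB_\alpha(x^i\circ\text{(section)})$ and $\dot x'^i_\alpha=\parB_\alpha(\dot x^i\circ\text{(section)})$. The map $K$ is then simply the identification $(X,x,x';\dot x,\dot x')\mapsto(X,x,\dot x;x',\dot x')$, i.e. the reshuffling of coordinates.

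Having written $K$ in coordinates, the two things to check are: (i) $K$ is a bijection on total spaces commuting with the projections to $\base$ (immediate from the coordinate formula, since $X$ is untouched), and is smooth with smooth inverse on each chart (it is literally a permutation of coordinates, hence smooth); and (ii) $K$ is independent of the chosen adapted chart, so that the local formulas glue to a globally defined morphism. For (ii) I would invoke the explicit transformation laws for the jet coordinates $x'$ under a change of adapted coordinates on $Y$ (see \cite[p.~96 ff.]{Saunders}), differentiate them to obtain the transformation law for $\dot x'$, and observe that the same transformation law is produced on the $J^1VY$ side; this is the content that makes $K$ canonical rather than chart-dependent. Equivalently — and this is the cleaner route — one can give a coordinate-free description: a point of $VJ^1Y$ over $j^1_p\bk$ is represented by a curve $t\mapsto j^1_p\bk_t$ in $J^1Y$ with $\bk_0=\bk$ and $\frac{d}{dt}\big|_0 \pi^1(j^1_p\bk_t)=0$, i.e. all the $\bk_t$ are sections through a fixed point-germ direction; the assignment $\frac{d}{dt}\big|_0 j^1_p\bk_t \mapsto j^1_p\!\left(p'\mapsto \frac{d}{dt}\big|_0 \bk_t(p')\right)$, noting that $p'\mapsto\frac{d}{dt}\big|_0\bk_t(p')$ is a section of $VY$, is exactly $K$, and its well-definedness follows from the symmetry of mixed partials (one may interchange $\frac{d}{dt}$ and $\parB_\alpha$). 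Checking that this abstract prescription agrees with the coordinate formula above, and that it does not depend on the representing curve, completes the argument.

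\textbf{Main obstacle.} The only genuine subtlety — and hence the step I expect to be the crux — is the independence of $K$ from the adapted chart, equivalently the compatibility of the two transformation laws in (ii): one must check that the affine-linear cocycle governing how $x'$ transforms, once differentiated in the fiber-variation direction to govern $\dot x'$, coincides with the cocycle that $J^1$ applied to $VY$ produces. In the curve-theoretic formulation this obstacle reappears as the need to justify the interchange $\frac{\partial}{\partial t}\parB_\alpha = \parB_\alpha\frac{\partial}{\partial t}$ and to verify that different curves $t\mapsto\bk_t$ with the same velocity jet yield the same image; both are routine once set up carefully, using that $\exp^Y$ (or any local trivialization) lets one write $\bk_t$ smoothly in $(t,p)$, but they are where all the real bookkeeping lives. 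Everything else — bijectivity, smoothness, commutation with projections — is formal.
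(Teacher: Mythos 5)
Your proposal is correct; the paper itself offers no argument here, deferring entirely to \cite[Chapter~17]{Palais68} and \cite[Theorem~4.4.1]{Saunders}, and what you write is essentially the standard proof given in those references (the coordinate permutation together with the curve-theoretic well-definedness via symmetry of mixed partials). Your coordinate formula $(X,x,x',\dot x,\dot x')\mapsto(X,x,\dot x,x',\dot x')$ agrees exactly with the local representative $\rep K$ that the paper records in the coordinate block following the proposition, so your writeup supplies precisely the details the paper outsources.
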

\[ \begin{xy} (0,0)*+{\base} = "X";%
(0,15)*+{Y} = "Y";%
(-30,30)*+{VY} = "VY";%
(30,30)*+{J^1Y} = "J1Y";%
(-30,50)*+{J^1VY} = "all1";%
(30,50)*+{VJ^1Y} = "all2";%
{\ar@{->}_{\pi} "Y"; "X"};%
{\ar@{->}_{\tau_Y|_{VY}} "VY"; "Y"};%
{\ar@{->}^{\pi^{1,0}} "J1Y"; "Y"};%
{\ar@{->}^{(\pi\circ\tau_Y|_{VY})^{1,0}} "all1"; "VY"};%
{\ar@{->}_{\tau_{J^1Y}|_{VJ^1Y}} "all2"; "J1Y"};%
{\ar@{->}_{K} "all2"; "all1"};%
{\ar@{->}^(0.7){T\pi^{1,0}|_{VJ^1Y}} "all2"; "VY"};%
{\ar@{->}_(0.7){J^1(\tau_Y|_{VY})} "all1"; "J1Y"};%
\end{xy} \]

\begin{proof}
See \cite[Chapter~17]{Palais68} and \cite[Theorem~4.4.1]{Saunders}. 
\end{proof}

\begin{lem}
\label{lem:jetsPBs}
Let $\jk\in\strain$ be a holonomic deformation jet. Then, the isomorphism $K$
induces an isomorphism  of vector bundles over $\base$,
\[
(\jk)^*K: (\jk)^*VJ^1Y\tto J^1(\bk^*VY).
\]
\end{lem}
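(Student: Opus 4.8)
The plan is to obtain $(\jk)^{*}K$ by pulling back $K$ along the section $\jk\colon\base\to J^{1}Y$ and then re-identifying the target fibre. Fix $p\in\base$ and put $q=j^{1}_{p}\bk\in J^{1}Y$. By the definition of a pullback bundle, the fibre of $(\jk)^{*}VJ^{1}Y$ over $p$ is the vector space $(VJ^{1}Y)_{q}:=(\tau_{J^{1}Y}|_{VJ^{1}Y})^{-1}(q)$. The commuting diagram of Proposition \ref{prop:isoSaunders}, in particular the relation $J^{1}(\tau_{Y}|_{VY})\circ K=\tau_{J^{1}Y}|_{VJ^{1}Y}$, shows that $K$ carries $(VJ^{1}Y)_{q}$ bijectively onto $(J^{1}(\tau_{Y}|_{VY}))^{-1}(q)$, the fibre of $J^{1}VY$ lying over $q$. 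Hence it suffices to identify this last space, canonically, with $(J^{1}(\bk^{*}VY))_{p}$; composing $K$ with the inverse of that identification produces $(\jk)^{*}K$ fibrewise, after which one checks linearity and smoothness in $p$.

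For the identification, recall that the canonical morphism $\tau_{Y}^{*}\bk\colon\bk^{*}VY\to VY$ over $\bk$ realizes $\bk^{*}VY$ as the restriction $VY|_{\bk(\base)}$ of the vector bundle $VY\to Y$, transported to $\base$ by $\bk$; as a morphism of fibre bundles over $\base$ it is fibrewise injective. It therefore identifies local sections $u$ of $\bk^{*}VY$ near $p$ with those local sections $w=\tau_{Y}^{*}\bk\circ u$ of $VY\to\base$ that satisfy $\tau_{Y}\circ w=\bk$. Passing to first jets, $j^{1}_{p}w$ determines $j^{1}_{p}u$ (since $\tau_{Y}^{*}\bk$ is an embedding) while $J^{1}(\tau_{Y}|_{VY})(j^{1}_{p}w)=j^{1}_{p}(\tau_{Y}\circ w)=j^{1}_{p}\bk=q$, so one obtains a linear injection $(J^{1}(\bk^{*}VY))_{p}\hookrightarrow(J^{1}(\tau_{Y}|_{VY}))^{-1}(q)$. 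Surjectivity is the only step that uses holonomicity of $\jk$: given $j^{1}_{p}w$ in $(J^{1}(\tau_{Y}|_{VY}))^{-1}(q)$, so that $w$ is a section of $VY\to\base$ near $p$ with $j^{1}_{p}(\tau_{Y}\circ w)=j^{1}_{p}\bk$, work in the adapted coordinates $z=(X,x,x')$ on $J^{1}Y$ and $(X,x,\dot x)$ on $VY$; the hypothesis says exactly that the $x$-components of $w$ agree to first order at $p$ with those of $\bk$, so replacing the $x$-components of $w$ by those of $\bk$ and leaving the $\dot x$-components unchanged yields a section $w'$ of $VY\to\base$ with $\tau_{Y}\circ w'=\bk$ and $j^{1}_{p}w'=j^{1}_{p}w$. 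Thus $w'$ comes from a section of $\bk^{*}VY$ and $j^{1}_{p}w$ lies in the image.

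Combining the two steps, $(\jk)^{*}K$ is the fibrewise composite
\[
(VJ^{1}Y)_{q}\ \xrightarrow{\ K\ }\ \bigl(J^{1}(\tau_{Y}|_{VY})\bigr)^{-1}(q)\ \xleftarrow{\ \sim\ }\ \bigl(J^{1}(\bk^{*}VY)\bigr)_{p}.
\]
In the local coordinates above both maps are given by linear formulas with coefficients depending smoothly on $p$ — for $K$ this is the classical coordinate description of the isomorphism $VJ^{1}Y\cong J^{1}VY$, cf.\ \cite[Theorem~4.4.1]{Saunders} — so $(\jk)^{*}K$ is a smooth isomorphism of vector bundles over $\base$, as asserted. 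The only non-formal ingredient is the surjectivity in the middle step, and it is precisely there that holonomicity enters: it guarantees that the non-vertical part of the first-order data forced on $w$ is the $1$-jet of an actual section $\bk$ of $Y$; without such a $\bk$ the right-hand side $J^{1}(\bk^{*}VY)$ would not even be defined.
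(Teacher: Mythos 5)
Your argument is correct, but it follows a genuinely different route from the paper's: the paper disposes of the lemma by citing Theorem~17.1 of Palais \cite{Palais68}, which provides the natural identification of the pullback of a jet bundle along (the jet of) a section as part of a general functorial statement, whereas you construct the isomorphism by hand. Concretely, you use the commutation relation $J^{1}(\tau_{Y}|_{VY})\circ K=\tau_{J^{1}Y}|_{VJ^{1}Y}$ from the diagram of Proposition~\ref{prop:isoSaunders} to restrict $K$ to fibres, and then identify $\bigl(J^{1}(\tau_{Y}|_{VY})\bigr)^{-1}(j^{1}_{p}\bk)$ with $\bigl(J^{1}(\bk^{*}VY)\bigr)_{p}$ by composing with $\tau_{Y}^{*}\bk$ and taking jets; the surjectivity argument in adapted coordinates (replacing the $x$-components of $w$ by those of $\bk$, which is harmless to the $1$-jet precisely because $j^{1}_{p}(\tau_{Y}\circ w)=j^{1}_{p}\bk$) is sound and is indeed the one place where holonomicity is used. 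What the citation buys is brevity and generality; what your proof buys is self-containedness, an explicit localization of the role of holonomicity, and coordinate formulas consistent with those the paper uses later (e.g.\ \eqref{eq:j1v}). One presentational point worth tightening: the fibre $\bigl(J^{1}(\tau_{Y}|_{VY})\bigr)^{-1}(q)$ does not carry an a priori vector space structure (two jets in it need not be addable fibrewise over $Y$), so ``one checks linearity'' should be phrased as transporting the linear structure from $\bigl(J^{1}(\bk^{*}VY)\bigr)_{p}$, or equivalently as verifying that the composite $(VJ^{1}Y)_{q}\tto\bigl(J^{1}(\bk^{*}VY)\bigr)_{p}$ is linear for the two intrinsic vector space structures at the ends; your coordinate formulas already show this, so it is a matter of wording rather than a gap.
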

\begin{proof}
See Theorem 17.1 in \cite[p. 82]{Palais68}.
\end{proof}

\InCoord{
A local coordinate system for $VJ^1Y$ is
\[
(z,\dot{x},\dot{x}')=(X,x,x',\dot{x},\dot{x}'):
(j_p^1s)^\cdot \in VJ^1Y\lmt
(z,\dot{x},\dot{x}')((j_p^1s)^\cdot)\in
\R^{d}\times\R^{m}\times\R^{d\times m}\times\R^{m}\times\R^{d\times m}.
\]
This local system of coordinates is adapted in the sense that for $[t\mapsto j_p^1s(t)]\in VJ^1Y$,
where $s(0)=\bk$, 
\[
z([t\mapsto j_p^1s(t)])=z(j_p^1\bk).
\]
For $j_p^1\bk\in J^1Y$ 
\[
\begin{aligned}\parJY_\alpha|_{j_p^1\bk} & =[t\mapsto z^{-1}(X(j_p^1\bk)+t\,e_\alpha,x(j_p^1\bk),x'(j_p^1\bk))]\\
\parJY_i|_{j_p^1\bk} & =[t\mapsto z^{-1}(X(j_p^1\bk),x(j_p^1\bk)+t\,e_i,x'(j_p^1\bk))]\\
(\parJY)_i^\alpha|_{j_p^1\bk} & =[t\mapsto z^{-1}(X(j_p^1\bk),x(j_p^1\bk),x'(j_p^1\bk)+t\,e_i^\alpha)],
\end{aligned}
\]
we have 
\[
z\brk{\parJY_{\beta}|_{j_p^1\bk}}=z(j_p^1\bk),\qquad z\brk{\parJY_j|_{j_p^1\bk}}=z(j_p^1\bk),\qquad z\brk{(\parJY)_j^{\beta}|_{j_p^1\bk}}=z(j_p^1\bk).
\]
Moreover, 
\[
\dot{x}^i\brk{\parJY_{\beta}|_{j_p^1\bk}}=0,\qquad\dot{x}^i\brk{\parJY_j|_{j_p^1\bk}}=\delta_j^i,\qquad\dot{x}^i\brk{(\parJY)_j^{\beta}|_{j_p^1\bk}}=0,
\]
and 
\[
{\dot{x}'}\,_\alpha^i\brk{\parJY_{\beta}|_{j_p^1\bk}}=0,\qquad{\dot{x}'}\,_\alpha^i\brk{\parJY_j|_{j_p^1\bk}}=0,\qquad{\dot{x}'}\,_\alpha^i\brk{(\parJY)_j^{\beta}|_{j_p^1\bk}}=\delta_j^i\delta_\alpha^{\beta}.
\]

A local coordinate system for $J^1VY$
is 
\[
(X,x,\dot{x},x',\dot{x}'): j_p^1v\in J^1VY\lmt(X,x,\dot{x},x',\dot{x}')(j_p^1v)\in
\R^{d}\times\R^{m}\times\R^{m}\times\R^{d\times m}\times\R^{d\times m}.
\]
For a local section $v=v^i\,\bk^*\parY_i$ of $\bk^*VY$,
\[
X^\alpha(j_p^1v)=X^\alpha(v_p)\qquad x^i(j_p^1v)=x^i(v_p)\qquad\dot{x}^i(j_p^1v)=v^i(p)
\]
\[
{x'}_\alpha^i(j_p^1v)={x'}_\alpha^i(j_p^1\bk)\Textand(\dot{x}')_\alpha^i(j_p^1v)=(\parB_\alpha v^i)(p).
\]

The isomorphism $VJ^1Y\simeq J^1VY$
is represented by: 
\[
K\brk{v^i(p)\,\parJY_i|_{j_p^1\bk}+(\parB_\alpha v^i)(p)\,(\parJY)_i^\alpha|_{j_p^1\bk}}=j_p^1(v^i\,\bk^*\parY_i),
\]
In other words, The local representative $\rep K$ of $K$ is given by
\[
\rep K(X,x,x',\dot{x},\dot{x}')=(X,x,\dot{x},x',\dot{x}').
\]
Note that 
\begin{equation}
d\pi^{1,0}\circ\parJY_\alpha=(\pi^{1,0})^*\parY_\alpha\qquad d\pi^{1,0}\circ\parJY_i=(\pi^{1,0})^*\parY_i\textand d\pi^{1,0}\circ(\parJY)_i^\alpha=0.
\label{eq:useful3}
\end{equation}
We denote by $\dxJY^\alpha$, $\dxJY^i$ and $(\dxJY)_\alpha^i$
the corresponding co-frames: 
\begin{equation}
\dxJY^\alpha=(\pi^{1,0})^{\push}\dxY^\alpha=(\pi^1)^{\push}\dxB^\alpha\Textand\dxJY^i=(\pi^{1,0})^{\push}\dxY^i.
\label{eq:useful4}
\end{equation}
} 

\subsection{Compatibility and jet prolongation of velocity fields}

Evidently, special attention should be given to configuration jets induced
as jets of configurations. The analogous
situation applies to generalized velocity fields. In this section, we consider compatible
configuration jets and compatible velocity jets.

The jet prolongation mapping $j^1:\Q\to \strain$ is an injection,
where we omit the indication that $j^1$ needs to be restricted
first from $\Cone(Y)$ to $\Q$. 
Its differential
is a vector bundle morphism (see diagram below)
\[
dj^1:T\Q\tto(j^1)^*T\strain,
\]
mapping velocities at $\bk$ into velocity jets at $\jk$, 
\[
(dj^1)_{\bk} : T_{\bk}\Q\tto   ((j^1)^*T\strain)_\bk = T_{\jk}\strain.
\]
\[
\begin{xy}
(0,0)*+{\Q} = "Q";
(40,0)*+{\strain} = "S";
(40,15)*+{T\strain} ="TS";
(-10,15)*+{T\Q} = "TQ";
(15,15)*+{(j^1)^*T\strain} ="jTS";
{\ar@{->}_{j^1} "Q"; "S"};
{\ar@{->}_{\tau_\Q} "TQ"; "Q"};
{\ar@{->}^{(j^1)^*\tau_\strain} "jTS"; "Q"};
{\ar@{->}^{\tau_\strain} "TS"; "S"};
{\ar@{->}@/^{2pc}/^{Tj^1} "TQ"; "TS"};
{\ar@{->}^{dj^1} "TQ"; "jTS"};
{\ar@{->}^{{\tau_\strain^*}j^1} "jTS"; "TS"};
\end{xy}
\]

Since $T_{\bk}\Q\simeq\Cone(\bk^*VY)$ and $T_{\jk}\strain\simeq\Czero((\jk)^*VJ^1Y)$,
the differential of $j^1$ at $\bk$ can also be viewed as a linear
map 
\[
(dj^1)_{\bk}:\Cone(\bk^*VY)\tto\Czero((\jk)^*VJ^1Y),
\]
mapping velocity fields at $\bk$ into velocity jet fields at $j^1\bk$.

\begin{prop}
\label{prop:dj1} The differential of $j^1$ can be factored into
the action of $j^1$ and a vector bundle morphism: for $v\in T_{\bk}\Q\simeq\Cone(\bk^*VY)$,
\[
(dj^1)_{\bk}(v)=(\jk)^*K^{-1}\circ j^1v.
\]
In other words, $(dj^1)_\bk$ is represented by 
\[
j^1: \Cone(\bk^*VY) \to \Czero(J^1(\bk^*VY)) \simeq \Czero((j^1\bk)^*VJ^1Y),
\]
where the last isomorphism follows from Lemma~\ref{lem:jetsPBs}.
\end{prop}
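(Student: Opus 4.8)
The plan is to verify the claimed factorization chart-by-chart, since both sides are continuous linear maps $\Cone(\bk^*VY)\to\Czero((\jk)^*VJ^1Y)$ and it suffices to check they agree as maps of sections. First I would recall that, by the definition of the tangent space to $\strain$ and Proposition~\ref{prop:dj1}'s preamble, $(dj^1)_\bk(v)$ is computed by taking a path $\gamma$ in $\Q$ with $\gamma(0)=\bk$ and $\dot\gamma(0)=v$ — concretely $\gamma(t)=\chi_\bk(tv)$, i.e. $(\gamma(t))_p=\exp^Y(t\,v_p)$ — and then differentiating the path $t\mapsto j^1(\gamma(t))$ in $\strain$ at $t=0$. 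Thus $(dj^1)_\bk(v)$ is the velocity jet field $p\mapsto \tfrac{d}{dt}\big|_{0}\, j^1_p(\gamma(t)) \in (VJ^1Y)_{(\jk)_p}$.

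The key step is to identify this element of $VJ^1Y$ under the Saunders isomorphism $K$ of Proposition~\ref{prop:isoSaunders} with the holonomic jet $j^1_p v \in J^1(\bk^*VY)$, where on the right $v$ is regarded as a section $\base\to\bk^*VY$ and we use Lemma~\ref{lem:jetsPBs} to identify $J^1(\bk^*VY)$ with $(\jk)^*VJ^1Y$. This is most transparent in the adapted coordinates already set up in the excerpt: writing $v=v^i\,\bk^*\parY_i$ with $v^i\in\Cone(\base)$, the path $\gamma(t)$ has coordinate representation $x^i\circ\gamma(t) = (x^i\circ\bk) + t\,v^i + o(t)$ (to first order, by the form of $\exp^Y$ along the zero section), hence $j^1\gamma(t)$ has $J^1Y$-coordinates $x^i = (x^i\circ\bk)+tv^i+o(t)$ and ${x'}^i_\alpha = \parB_\alpha(x^i\circ\bk) + t\,\parB_\alpha v^i + o(t)$. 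Differentiating at $t=0$ gives the $VJ^1Y$-vector with components $\dot x^i = v^i$, $\dot{x}'{}^i_\alpha = \parB_\alpha v^i$ over the point $j^1_p\bk$. By the explicit local formula for $K$ recorded in the excerpt, $K$ of this vector is exactly $j^1_p(v^i\,\bk^*\parY_i) = j^1_p v$, which is the claim.

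To package this coordinate computation covariantly one runs it through the commuting square for $K$ in Proposition~\ref{prop:isoSaunders}: the leg $T\pi^{1,0}|_{VJ^1Y}$ sends $(dj^1)_\bk(v)_p$ to $v_p\in VY$ (differentiating $\pi^{1,0}\circ j^1\gamma(t)=\gamma(t)$), while $J^1(\tau_Y|_{VY})$ sends it to $j^1_p\bk$; since $K$ is fiber-preserving over both $VY$ and $J^1Y$ simultaneously and $j^1_p v$ has the same two images, and $K$ is a bijection, the two elements coincide. Either route works; I would present the coordinate version as the main argument and remark on the covariant one. The main obstacle — and it is minor — is bookkeeping: one must be careful that the path $\exp^Y(t v_p)$ genuinely has the stated first-order Taylor expansion in the adapted coordinates (true because the spray is consistent with the fiber structure, so $d\,\exp^Y$ at the zero vector is the identity on $VY$), and that the isomorphisms of Lemma~\ref{lem:jetsPBs} and \eqref{eq:iso4} are applied with the correct source and target so that ``$j^1 v$'' lands in $\Czero((\jk)^*VJ^1Y)$ rather than merely in $\Czero(J^1(\bk^*VY))$. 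Once the dictionary between the two pullback bundles is pinned down via $(\jk)^*K$, the identity $(dj^1)_\bk(v) = (\jk)^*K^{-1}\circ j^1 v$ is immediate from the local computation.
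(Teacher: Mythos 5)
Your main argument is correct and is essentially the paper's own proof: both compute $(dj^1)_{\bk}(v)$ via the curve $t\mapsto j^1\gamma(t)$ and then identify the resulting vertical vector with $j^1_pv$ through the isomorphism $K$. The paper does the identification abstractly, by viewing $v$ as the map $p\mapsto[t\mapsto s_p(t)]$ so that $(\jk)^*K^{-1}\circ j^1v$ is literally $[t\mapsto j^1s(t)]$, whereas you verify the same identification in the adapted coordinates using the local formula $\rep K(X,x,x',\dot{x},\dot{x}')=(X,x,\dot{x},x',\dot{x}')$; that is a legitimate, slightly more explicit rendering of the same step, and your care about the first-order expansion of $\exp^Y$ and about where $j^1v$ lands is well placed.

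One caution: the ``covariant'' alternative you sketch as a remark does not work as stated. The commuting square only says that $K$ intertwines the two projections to $VY$ and to $J^1Y$; the joint projection $\brk{T\pi^{1,0}|_{VJ^1Y},\ \tau_{J^1Y}|_{VJ^1Y}}$ is far from injective, since its fibre over a compatible pair $(v_p,\,j^1_p\bk)$ still carries all of the $\dot{x}'$ components (a space of dimension $dm$, namely the fibre of $V\pi^{1,0}$). So two elements of $VJ^1Y$ having the same two images need not coincide, and bijectivity of $K$ does not rescue the argument. Keep the coordinate computation (or the paper's path-of-jets identification) as the actual proof and drop, or substantially strengthen, the covariant remark.
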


\begin{proof}
For $v\in T_{\bk}\Q$, let $s:I\to\Q$ be a path of configurations
satisfying 
\[
s(0)=\bk\Textand\dot{s}(0)=v.
\]
By the definition of the differential via its action on curves, 
\[
(dj^1)_{\bk}(v)=[t\mapsto j^1s(t)]\in T_{\jk}\strain.
\]
We now view $v$ as a section $\Cone(\bk^*VY)$, \ie,
as a map 
\[
p\mapsto[t\mapsto s_p(t)].
\]
Then, 
\[
j^1v=j^1[t\mapsto s(t)]\in\Czero(J^1(\bk^*VY)),
\]
and 
\[
(\jk)^*K^{-1}\circ j^1v=[t\mapsto j^1s(t)]\in\Czero((\jk)^*VJ^1Y).
\]
\end{proof}
Since $(dj^1)_{\bk}(v)$ can be identified with $j^1v$, we will
use the shorter notation $j^1v$ rather than $(dj^1)_{\bk}(v)$
to denote the corresponding element of $T_{j^1\bk}\strain$, and treat 
\[
j^1:\Cone(\bk^*VY)\tto\Czero(J^1(\bk^*VY))
\]
as a representative of $(dj^1)_{\bk}$ .

\InCoord{
Using local coordinate frames, a velocity field $v$ at
$\bk$ may be represented locally in the form, 
\[
v=v^i\,\bk^*\parY_i,
\]
where $v^i$ are differentiable functions defined on the domain
of a chart. Its jet prolongation is the velocity jet field at $\jk$, represented
locally as
\begin{equation}
j^1v=v^i\,(\jk)^*\parJY_i+(\parB_\alpha v^i)\,(\jk)^*(\parJY)_i^\alpha.
\label{eq:j1v}
\end{equation}
} 


\section{Stresses}\label{sec:Stress}

This section introduces the stress object as a tensor valued measure that represents a force functional, non-uniquely. Particular attention is given to stresses measures that are continuous relative to volume measures on the manifold $ \base $.

\subsection{Variational stresses}
\label{sec5p1}

\begin{defn}
\label{def:bundle_of_stresses} The bundle $(T^*\strain,\tau_{\strain}^*,\strain)$
dual to the bundle of velocity jets $T\strain$ is termed the \Emph{bundle
of variational stresses}. Given a deformation jet $\xi\in\strain$, an element $\sigma\in T^*_{\xi}\strain$ is referred to as a \Emph{variational stress} at $\xi$.
\end{defn}

For every deformation jet $\xi\in\strain$, the vector space $T_{\xi}^*\strain$
is the dual of the vector space of velocity jets at $\xi$, $T_{\xi}\strain$. By the isomorphism
\eqref{eq:iso4}, 
\begin{equation}
T_{\xi}^*\strain\simeq(\Czero(\xi^*VJ^1Y))^*.
\end{equation}
Let $\bk\in\Q$ be given. The map 
\[
j^1: C^1(\bk^*VY)\tto C^0((j^1\bk)^*VJ^1Y)
\]
is an embedding. It follows from the Hahn-Banach theorem that its
dual,
\[
(j^1)^*:  (C^0((j^1\bk)^*VJ^1Y))^* \tto (C^1(\bk^*VY))^*
\]
is surjective; to every force at $\bk$, $f\in (C^1(\bk^*VY))^*$,
there corresponds a (non-unique) variational stress at $\jk$, $\sigma\in (C^0((j^1\bk)^*VJ^1Y))^*$,
such that 
\begin{equation}
f=(j^1)^*\sigma.\label{eq:Gen_Eq}
\end{equation}
That is, for every $v\in T_{\bk}\Q$, 
\begin{equation}
f(v)=\sigma(j^1v).
\label{eq:force_rep_by_stress}
\end{equation}
Equation \eqref{eq:force_rep_by_stress} is a generalization of the
\Emph{principle of virtual work} in continuum mechanics, and Equation
(\ref{eq:Gen_Eq}) is the corresponding generalization of the \Emph{equilibrium equation}.

It should be noted that the (generalized) equilibrium equation is
merely a representation theorem; it is not a law of physics. Note
also that the well-known \Emph{static indeterminacy}---the non-uniqueness of the stress representing a given force---is reflected by the non-injectivity of $(j^1)^*$, which in turn, follows from the fact that $ j^1 $ is not surjective.

Let $\bk\in \Q$, hence $j^1\bk\in \strain$. By the Riesz representation theorem, the space of continuous linear functionals on  $\Czero$-sections, 
\[
T_{j^1\bk}^*\strain \simeq (\Czero((j^1\bk)^*VJ^1Y))^*,
\]
coincides with the space of Radon measures valued in the dual vector bundle $((j^1\bk)^*VJ^1Y)^*$.
 Locally,  a variational stress $\sigma\in T^*_{j^1\bk}\strain$ is represented by a collection of Radon measures 
\[
\{\mu_i,\,\mu_i^\alpha ~:~  1\leq i\leq m,\,\,  1\le \alpha \le d\}
\] 
 so that in case $ v $ or $ \sigma $ are supported in the domain of a single chart,
\begin{equation}
\label{stressrepmeasures}
\sigma(j^1v)=\int_\base v^i \,d\mu_i+\int_\base (\partial_\alpha^\base v^i)\, d\mu^\alpha_i.
\end{equation}
In the general case, $ \sigma(j^1v) $ is evaluated using a partition of unity.

\subsection{Continuous variational stresses}

Equation \eqref{stressrepmeasures} shows that variational stresses may be as singular as measures. In this section, we restrict our attention to continuous variational stresses, that is, variational stresses for which the measures $\{\mu_i,\,\mu_i^\alpha\}$ are absolutely continuous with respect to some smooth volume form on $\base$.

\begin{defn}
The bundle of \Emph{stress densities} is 
\[
\stds =\Hom(VJ^1Y,(\pi^1)^*\ext^{d}T^*\base).
\]
It is a vector bundle over $J^1Y$, with projection which we denote by $\pi_{\stds}: \stds\to J^1Y$.
For $\xi\in\strain$,
\[
\xi^*\stds = \Hom(\xi^* VJ^1Y,\ext^{d}T^*\base),
\]
is the $ C^0 $-bundle of variational stress densities along $\xi$; it is a vector bundle over $\base$.
\end{defn}

\begin{defn}
A variational stress $\sigma\in T_{\xi}^*\strain$ at $\xi\in\strain$
is termed \Emph{continuous} if there exists a \Emph{variational stress density field} at $\xi$,
\[
\StressDensity\in \Czero(\xi^*\stds)  \simeq \Czero(\Hom(\xi^* VJ^1Y,\ext^{d}T^*\base)),
\]
such that for every velocity jet $\eta$ at $\xi$, the virtual power
that $\sigma$ expends on $\eta$ is given by 
\[
\sigma(\eta)=\int_{\base}\StressDensity\pushf{}\eta.
\]
Note that on the left-hand side, $\eta$ is viewed as an element of
$T_{\xi}\strain$, whereas on the right-hand side, $\eta$ is viewed
as an element of $\Czero(\xi^*VJ^1Y)$. (See below the local expressions for continuous variational stress densities.)
\end{defn}

Let $f$ be a force at $\bk$ and suppose that $f$ is represented by a continuous variational stress at $j^1\bk$, $\sigma\in T^*_{j^1\bk}\strain$ with variational stress density field $\StressDensity\in {C^0(\Hom(VJ^1Y,(\pi^1)^*\ext^{d}T^*\base))}$.
Then, the virtual power expended
by $f$ is given by
\begin{equation}
f(v)= \sigma(j^1v) = \int_{\base}\StressDensity\pushf{}(j^1v).
\label{eq:eq_eq}
\end{equation}

\subsection{Traction stresses}
\label{sec_traction_stress_densities}

In classical formulations of continuum mechanics in a Euclidean space,
the stress object plays two important roles: it determines the traction
fields on sub-bodies via the Cauchy formula, and it acts on velocity jets
to produce power. For continuous stresses on manifolds, two distinct objects play these two roles. 

The variational stress, as defined above
and as its name suggests, produces power when it acts on velocity jets. The object that determines the traction fields on the boundaries of
sub-bodies will be referred to as \Emph{traction stress} (see \cite{Segev2002},
where it is referred to as the Cauchy stress, and \cite{SegevRevMMAS2012},
for the case of a trivial bundle). 

\begin{defn}
The bundle of \Emph{traction stress densities} is
\[
\trstds:=\Hom(VY,\pi^*\ext^{d-1}T^*\base).
\]
It is a vector bundle over $Y$, with projection which we denote by $\pi_{\trstds}:\trstds\to Y$.
For $\bk\in\Q$,
\[
\bk^*\trstds = \Hom(\bk^* VY,\ext^{d-1}T^*\base) ,
\]
is the bundle of traction stresses along $\bk$; it is a vector bundle over $\base$.
\end{defn}

\begin{defn}
A \Emph{traction stress density field} at $\bk$ is a continuous section of the bundle of traction stress densities along $\bk$,
\[
\trstf \in \Czero(\bk^*\trstds) = \Czero(\Hom(\bk^* VY,\ext^{d-1}T^*\base)).
\]
\end{defn}
One would like to restrict traction stress density fields to co-dimension $1$ submanifolds of $\base$. In particular, to the boundary of $\B$.
Consider therefore, an embedded $(d-1)$-dimensional, oriented submanifold $\subm\subset\base$. Denote by $\iota_\subm:\subm\to \base$ the inclusion. We denote by $Y|_\subm$ the restriction of the fiber bundle $Y$ to $\subm$. Formally,
\[
Y|_{\subm}:=(\iota_\subm)^*Y
\Textand
\pi|_\subm:=(\iota_\subm)^*\pi : Y|_\subm \to \subm.
\]
Let $(E,\pi_E,Y)$ be a vector bundle over $Y$ (below $E$ will represent the vector bundles $\trstds$ and $VY$). One can restrict $E$ to $\subm$ by pulling back $E$ over $Y|_\subm$ using the map $\pi^*\iota_\subm$ to obtain the bundle $E|_\subm:=(\pi^*\iota_\subm)^*E$ with the corresponding projection 
\[
\pi_E|_\subm = (\pi^*\iota_\subm)^* \pi_E :E|_\subm\to Y|_\subm;
\]
see the following diagram.
\[
\begin{xy}
(-15,15)*+{\subm} = "1x1";
(25,15)*+{\base} = "2x1";%
(-15,35)*+{Y|_\subm} = "1x2";%
(25,35)*+{Y} = "2x2";%
(-15,55)*+{E|_\subm} = "1x3";
(25,55)*+{E} = "2x3";%
{\ar@{<-}^{\iota_\subm} "2x1"; "1x1"};%
{\ar@{->}^{\pi} "2x2"; "2x1"};%
{\ar@{->}_{\pi|_\subm} "1x2"; "1x1"};%
{\ar@{->}^{\pi_E} "2x3"; "2x2"};%
{\ar@{->}^{\pi^*\iota_{\subm}} "1x2"; "2x2"};%
{\ar@{->}_{\pi_E|_\subm} "1x3"; "1x2"};%
{\ar@{->}^{\pi_E^*\pi^*\iota_{\subm}} "1x3"; "2x3"};%
\end{xy} 
\]

Equipped with this notation scheme,
\begin{equation}
\label{trstdsrestricted}
\begin{split}
\trstds|_\subm &= (\pi^*\iota_\subm)^*\trstds \\
&= \Hom((\pi^*\iota_\subm)^*VY,(\pi^*\iota_\subm)^*\pi^*(\ext^{d-1}T^*\base)) \\
&=\Hom(VY|_\subm,(\pi^*\ext^{d-1}T^*\base)|_\subm).
\end{split}
\end{equation}
The bundle $\trstds|_\subm$ is the bundle of traction stress densities restricted to $\subm$, which nonetheless act on any $(d-1)$-tuple of vectors in $T\base$; it is a vector bundle over $Y|_\subm$.

The inclusion $\iota_\subm:\subm\to\base$ induces a restriction 
\[
(d\iota_\subm)^*: \iota_\subm^*\ext^{d-1}T^*\base\tto \ext^{d-1}T^*\subm
\] 
of $(d-1)$-forms to vectors tangent to $\subm$.  It follows that
\[
\begin{split}
(\pi|_\subm)^*(d\iota_\subm)^* &: (\pi|_\subm)^*\iota_\subm^*\ext^{d-1}T^*\base\tto (\pi|_\subm)^*\ext^{d-1}T^*\subm \\
&: \pi^*\ext^{d-1}T^*\base|_\subm\tto (\pi|_\subm)^*\ext^{d-1}T^*\subm.
\end{split}
\] 
Composition with the latter defines a vector bundle morphism over $Y|_\subm$,
\[
\cres_{\subm}:\trstds\resto{\subm}\tto\Hom(VY|_\subm,(\pi|_\subm)^*\ext^{d-1}T^*\subm).
\]
The mapping $\cres_{\subm}$ is a generalization of the traditional
Cauchy formula for continuum mechanics in Euclidean space, $\tau\mapsto\tau(\mathbf{n})$,
where $\mathbf{n}$ is the unit normal to the oriented submanifold.
We will therefore refer to it as the \Emph{Cauchy mapping}.

Let $\bk\in\Q$. Denote by 
\[
\bk_\subm : \subm \to Y|_\subm
\]
 the restriction of $\bk$ to $\subm$, namely,
\[
\pi^*\iota_\subm\circ \bk_\subm=\bk\circ\iota_\subm.
\]
Note that for every vector bundle $E$ over $Y$,
\[
\bk_\subm^* E|_\subm = \bk_\subm^*(\pi^*\iota_\subm)^*E = \iota_\subm^* \bk^* E = (\bk^*E)|_\subm.
\]

Let $\trstf$ be a traction stress field at $\bk$.
Then, $\iota_\subm^*\trstf$ is a section of 
$(\bk\circ\iota _\subm)^*\trstds$, where
\[
\begin{split}
(\bk\circ\iota _\subm)^*\trstds &\simeq \Hom((\bk\circ\iota _\subm)^*VY,\iota _\subm^*\ext^{d-1}T^*\base) \\
&\simeq \Hom(\bk^*VY|_\subm,\ext^{d-1}T^*\base|_\subm),
\end{split}
\]
and we write $\bk^*VY|_\subm$, rather than $(\bk^*VY)|_\subm$, as no ambiguity should arise.
Moreover, by \eqref{trstdsrestricted}, 
\[
\bk_\subm^*\trstds|_\subm\simeq \Hom(\bk_\subm^*VY|_\subm, \ext^{d-1}T^*\base|_\subm)\simeq (\bk\circ\iota _\subm)^*\trstds .
\]
We may therefore apply the pullback of the Cauchy mapping $\bk_\subm ^*\cres_\subm$ on $\iota_\subm^*\trstf$ to obtain a section 
\[
\SurfaceForceDensity:=(\bk_\subm^*\cres_\subm)(\iota_\subm^*\trstf) \in  \Czero(\Hom(\bk^*VY|_\subm,\ext^{d-1}T^*\subm)).
\]
In particular, for the case  $\subm=\partial\base$, a  traction stress density field $\trstf$ at $\bk$ induces via the Cauchy mapping a surface force density field at $\bk$, that is, a vector bundle morphism
\[
\SurfaceForceDensity= (\bk_{\partial\base}^*\cres_{\partial\base})(\iota_{\partial\base}^*\trstf)\in \Czero(\Hom(\bk^*VY|_{\partial\base},\ext^{d-1}T^*\partial\base)).
\]
In order to simplify the notation, we define the morphism
\[  
\cres_{\partial\base,\bk}:
 \Czero(\Hom(\bk^* VY,\ext^{d-1}T^*\base))
\tto
\Czero(\Hom(\bk^*VY|_{\partial\base},\ext^{d-1}T^*\partial\base))
\]
by
\[ 
\tau\lmt(\bk_{\partial\base}^*\cres_{\partial\base})(\iota_{\partial\base}^*\trstf)
 \]
so that
 \[
 \SurfaceForceDensity= \cres_{\partial\base,\bk}(\trstf).
 \]
Thus, $ \cres_{\partial\base,\bk} $ is the Cauchy mapping along $ \bk $. These formal definitions simply imply that
\begin{equation} 
\label{eq:CauchyRest} 
\SurfaceForceDensity\circ v\resto{{\partial\base}} = 
\cres_{\partial\base,\bk}(\trstf)\circ v\resto{{\partial\base}}
= \iota_{\partial\base}^\sharp (\trstf\resto{{\partial\base}}\circ 
     v\resto{{\partial\base}}).
\end{equation}


\subsection{The traction stress induced by a variational stress density}
\label{tsivs}

The variational stress determines uniquely 
the traction stress, but the converse does not hold. This section
introduces the construction of the traction stress from the variational stress for the fiber bundle
setting. 

The vector bundle $V\pi^{1,0}$ is a sub-bundle of $VJ^1Y$. 
Its elements are represented by paths in $J^1Y$ that are vertical over $Y$ along
the fibers of $\pi^{1,0}$. The fiber bundle $(J^1Y,\pi^{1,0},Y)$
is an affine bundle modeled on the vector bundle \cite[Theorem~4.1.11]{Saunders}
\[
\pi^*T^*\base\otimes_{Y}VY.
\]
Consequently,
\[
V\pi^{1,0}\simeq(\pi^1)^*T^*\base\otimes_{J^1Y}(\pi^{1,0})^*VY\simeq\Hom((\pi^1)^*T\base,(\pi^{1,0})^*VY).
\]
The inclusion $\inclusion:V\pi^{1,0}\inj VJ^1Y$ induces a restriction
\[
\inclusion^*:\stds \tto\Hom(V\pi^{1,0},(\pi^1)^*\LamD),
\]
defined by
\[
\sdv\lmt\sdv\circ\inclusion.
\]
In view of the isomorphism
\[
\Hom(V\pi^{1,0},(\pi^1)^*\LamD)\simeq (\pi^{1,0})^*\brk{
\pi^*T\base\otimes_{Y}(VY)^*\otimes_{Y}\pi^*\LamD},
\]
we define
\[
\contr:\Hom(V\pi^{1,0},(\pi^1)^*\LamD)\tto (\pi^{1,0})^*\brk{(VY)^*\otimes_{Y}\pi^*\ext^{d-1}T^*\base} \simeq (\pi^{1,0})^*\trstds
\]
to be the contraction of the third and first factors in the product, namely,
\[
\contr(v\otimes\varphi\otimes\theta)=\varphi\otimes(v\inc\theta).
\]
Note that $\contr$ is a vector bundle morphism over $J^1Y$.

Next, define 
\[
P:\stds \tto(\pi^{1,0})^*\trstds,
\]
a vector bundle morphism over $J^1Y$, by
\begin{equation}
P=\contr\circ\inclusion^*.
\end{equation}
For $\bk\in\Q$, 
\[
P_\bk := (j^1\bk)^*P: (j^1\bk)^*\stds \tto \bk^*\trstds
\]
maps variational stress densities at $j^1\bk$ to traction stress densities at $\bk$.
It follows that composition with $P_\bk$ maps variational stress density fields along $j^1\bk$ to traction stress fields along $\bk$.

\InCoord{ An element of $(\jk)^*\stds$ at $p$
is of the form 
\[
\StressDensity_ p =\brk{a_i\dxJY^i|_{j_p^1\bk}+a_i^{\alpha}(\dxJY)_{\alpha}^i|_{j_p^1\bk}}\otimes\Vol_p.
\]
Its restriction to $(V\pi^{1,0})_{j_p^1\bk}$, it can be written
as 
\[
a_i^{\alpha}\,\parB_{\alpha}|_p\otimes\dxY^i|_{\bk_p}\otimes\Vol_p.
\]
Then, 
\[
P_\bk (\StressDensity_p) = a_i^{\alpha}\,\dxY^i|_{\bk_p}\otimes(\parB_{\alpha}\inc\Vol)_p
\]
Thus, for a variational stress density at $\jk$ given locally by
\[
\StressDensity=\brk{\StressDensity_i\,(\jk)^*\dxJY^i+\StressDensity_i^{\alpha}\,(\jk)^*(\dxJY)_{\alpha}^i}\otimes\Vol,
\]
we have 
\[
P_\bk \pushf{}\StressDensity=\StressDensity_i^{\alpha}\,\bk^*\dxY^i\otimes(\parB_{\alpha}\inc\Vol).
\]
} 

\subsection{The exterior jet of a differentiable traction stress density}

%
%
Recall the definition of a linear differential operator (see \cite[Chapter 3]{Palais68}).

\begin{defn}
Let $(E,\pi_E,\man)$ and $(F,\pi_F,\man)$ be vector bundles over the same base manifold $\man$. A linear map $D:C^k (E)\to C^{k-1}(F)$ is called a \Emph{first-order linear differential operator} from $E$ to $F$ if there exists a vector bundle morphism $\tilde{D}\in L(J^1E,F)$, such that
\[
D(s)= \tilde{D}\circ j^1s,\qquad \forall s\in C^k(E).
\]
Note that $J^1E$ is a vector bundle over $\man$ only if $E$ is a vector bundle. We will refer to $ \tilde{D} $ as the morphism associated with $ D $.
\end{defn}

For convenience, we recall the definitions of the following vector bundles: 
\[
\begin{aligned}
& \text{body force densities} &\qquad  \vals B &= \Hom(VY,\pi^*\LamD), \\
& \text{surface force densities} &\qquad \vals T &= \Hom(VY|_{\partial\base},(\pi|_{\partial\base})^*\ext^{d-1}T^*\partial\base), \\
& \text{variational stress densities} &\qquad \stds &= \Hom(VJ^1Y,(\pi^1)^*\LamD), \\
& \text{traction stress densities} &\qquad \trstds &= \Hom(VY,\pi^*\ext^{d-1}T^*\base).
\end{aligned}
\]
Each of these bundles can be pulled back with either $\bk$ or $j^1\bk$; sections of the pullback bundles are referred to as fields along $\bk$. 

\begin{defn}
Let $\bk\in\Q$.
The \Emph{exterior jet} differential operator along $\bk$ 
\[
\frakd_\bk:\Cone(\bk^*\trstds) \tto  \Czero((j^1\bk)^*\stds),
\]
is defined as follows. Let $\trstf \in \Cone(\bk^*\trstds)$ be a traction
stress density field along $\bk$. Then, for all $v\in \Cone(\bk^*VY)$,
\[
(\frakd_\bk \trstf)\circ j^1v=d(\trstf\pushf v).
\]
\end{defn}

\begin{prop}
$\frakd_\bk$ is a well-defined first-order linear differential operator. 
\end{prop}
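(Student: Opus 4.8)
The plan is to verify that the prescribed formula $(\frakd_\bk\trstf)\circ j^1v = d(\trstf\pushf v)$ genuinely defines an operator with the two required properties: (i) for each $\trstf$ the assignment $v\mapsto d(\trstf\pushf v)$ factors through $j^1v$ via a fixed vector bundle morphism $J^1(\bk^*VY)\to(j^1\bk)^*\stds$, and (ii) the dependence on $\trstf$ is $\scrD(\base)$-linear of first order, i.e. $\frakd_\bk$ itself is a first-order linear differential operator in the sense of the definition recalled above. Throughout I identify $(j^1\bk)^*\stds$ with $\Hom(J^1(\bk^*VY),\LamD)$ using Lemma~\ref{lem:jetsPBs} (the isomorphism $(j^1\bk)^*K$), so that specifying a variational stress density field along $j^1\bk$ is the same as specifying, fiberwise over $\base$, a linear map from $1$-jets of velocity fields to $d$-forms; this is exactly the target type of $d(\trstf\pushf v)$.

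First I would establish well-definedness and the factorization through the jet. Fix $p\in\base$. The value $d(\trstf\pushf v)_p$ depends on $\trstf\pushf v$ only through its value and first derivatives at $p$; since $\trstf$ is fixed, the Leibniz rule gives, in the local frames of the InCoord block, $\trstf\pushf v = (\trstf_i\, v^i)\otimes(\text{appropriate }(d-1)\text{-form})$-type expression whose exterior derivative is linear in $v^i(p)$ and $(\parB_\alpha v^i)(p)$ — precisely the data of $j^1_p v\in J^1(\bk^*VY)_p$. Hence there is a unique linear map $(\frakd_\bk\trstf)_p\colon J^1(\bk^*VY)_p\to(\LamD)_p$ with $(\frakd_\bk\trstf)_p(j^1_pv)=d(\trstf\pushf v)_p$, and as $p$ varies this assembles to a continuous section $\frakd_\bk\trstf\in\Czero((j^1\bk)^*\stds)$ because $\trstf\in\Cone$ means its first derivatives are $\Czero$. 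This also shows $\frakd_\bk\trstf$ does not depend on the choice of $v$ representing a given jet, so the defining relation is consistent.

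Next I would check that $\trstf\mapsto\frakd_\bk\trstf$ is linear and first-order. Linearity in $\trstf$ is immediate from linearity of $v\mapsto\trstf\pushf v$ in $\trstf$ and of $d$. For the first-order property I exhibit the associated morphism $\widetilde{\frakd_\bk}\in L\big(J^1(\bk^*\trstds),(j^1\bk)^*\stds\big)$: locally $d(\trstf\pushf v)$ involves $\trstf_i$ and $\parB_\alpha\trstf_i$ together with $v$ and $\parB_\alpha v$, and since $\trstf$ already carries a $(d-1)$-form factor, the exterior derivative produces a $d$-form with coefficients that are $\reals$-bilinear in $(j^1_p\trstf, j^1_pv)$; fixing the $j^1_pv$-slot yields a linear map out of $J^1(\bk^*\trstds)_p$, i.e. the morphism $\widetilde{\frakd_\bk}$, and $\frakd_\bk\trstf=\widetilde{\frakd_\bk}\circ j^1\trstf$. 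A coordinate-free way to say the same thing: for $f\in\scrD(\base)$, $d((f\trstf)\pushf v)=d(f\,(\trstf\pushf v))=df\wedge(\trstf\pushf v)+f\,d(\trstf\pushf v)$, so $\frakd_\bk(f\trstf)-f\,\frakd_\bk\trstf$ depends on $\trstf$ only through $\trstf$ itself (order zero) with coefficient $df$ — this is the symbol condition characterizing first order. I would present this computation invariantly and relegate the index-chasing to the accompanying InCoord block.

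The main obstacle I anticipate is purely bookkeeping rather than conceptual: making the target identification honest. One must confirm that the fiberwise linear map produced above really lands in $\stds=\Hom(VJ^1Y,(\pi^1)^*\LamD)$ pulled back along $j^1\bk$, which requires tracking how $d(\trstf\pushf v)$ transforms under the isomorphism $K$ of Proposition~\ref{prop:isoSaunders} and its pullback in Lemma~\ref{lem:jetsPBs} — i.e. checking that the "$v^i$" and "$\parB_\alpha v^i$" slots match the $\parJY_i$ and $(\parJY)_i^\alpha$ directions in $VJ^1Y$ correctly, with no stray holonomicity constraint. Since $\frakd_\bk$ is defined only on holonomic jets $j^1\bk$ and is tested only on prolonged velocities $j^1v$, the identification is the one of Lemma~\ref{lem:jetsPBs}, and the verification reduces to the local formula $j^1v = v^i(\jk)^*\parJY_i + (\parB_\alpha v^i)(\jk)^*(\parJY)_i^\alpha$ already recorded in \eqref{eq:j1v}; matching exterior-derivative coefficients against this is the one place where care is needed, but it is routine once the frames are in place.
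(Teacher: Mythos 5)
Your proposal is correct and takes essentially the same route as the paper: in both cases the substance is the local computation $d(\trstf\pushf v)=\brk{(\parB_{\alpha}\trstf_i^{\alpha})v^i+\trstf_i^{\alpha}\,(\parB_{\alpha}v^i)}\,\Vol$, which exhibits $\frakd_\bk\trstf$ explicitly as depending linearly on $j^1v$ (hence well-defined, via the identification of Lemma~\ref{lem:jetsPBs}) and linearly on $\trstf$ and its first derivatives (hence first order). Your additional invariant Leibniz-rule/commutator remark is a pleasant supplement but does not change the argument.
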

\begin{proof}
In local coordinates, a  traction
stress density field along $\bk$ takes the form
\[
\trstf = \trstf_i^{\alpha}\,\bk^*\dxY^i\otimes(\parB_{\alpha}\inc\Vol),
\]
where
\[
dX = \dxB^1\wedge\cdots\wedge \dxB^d.
\]
A velocity at $\bk$ is given locally by 
\[
v=v^i\,\bk^*\parY_i,
\]
hence, 
\[
\trstf \pushf v = \trstf_i^{\alpha}v^i\,(\parB_{\alpha}\inc\Vol).
\]
Taking the exterior derivative 
\[
d(\trstf\pushf v)=\brk{(\parB_{\alpha}\trstf_i^{\alpha})v^i+\trstf_i^{\alpha}\,(\parB_{\alpha}v^i)}\,\Vol.
\]
By the local expression \eqref{eq:j1v} for $j^1v$, 
it follows that
\[
\frakd_\bk \trstf=\brk{(\parB_{\alpha}\trstf_i^{\alpha})\,(\jk)^*\dxJY^i+\trstf_i^{\alpha}\,(\jk)^*(\dxJY)_{\alpha}^i}\otimes\Vol
\]
satisfies the defining property of $\frakd_\bk \trstf$ and depends linearly on $\trstf$ and its derivatives.
\end{proof}

\subsection{The divergence of differentiable variational stress densities and the equilibrium field equations}
\label{defe}

In this section we define the divergence differential operator for differentiable variational stress densities. The divergence operator enables one to tranform the weak form of the compatibility condition between continuous forces and stresses to a strong form of the equilibrium differential equation. For the rest of this section, we restrict ourselves to $C^2$-configurations and $C^1$-stress density fields.

\begin{prop}
\label{prop:divergence}
Let $\bk$ be a $C^2$-configuration. There exists a first-order linear differential operator mapping differentiable variational stress density fields along $j^1\bk$ into continuous body force density fields along $\bk$,
\[
\divergence_\bk: \Cone((\jk)^* \stds) \tto \Czero(\bk^*\vals B),
\]
satisfying for every variational stress density field $\StressDensity\in\Cone((\jk)^* \stds)$ and virtual velocity $v\in C^1(\bk^*VY)$,
\begin{equation}
(\divergence_\bk\StressDensity)\circ v=\frakd_\bk(P_\bk \pushf\StressDensity)\pushf j^1v -\StressDensity\pushf j^1v.
\label{eq:def_divergence}
\end{equation}

\end{prop}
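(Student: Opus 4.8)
The plan is to work in local coordinates and exhibit the candidate differential operator explicitly, then argue that the local pieces patch together. First I would compute the right-hand side of \eqref{eq:def_divergence} in a chart. Write a variational stress density field along $\jk$ as
\[
\StressDensity=\brk{\StressDensity_i\,(\jk)^*\dxJY^i+\StressDensity_i^{\alpha}\,(\jk)^*(\dxJY)_{\alpha}^i}\otimes\Vol,
\]
so that by the last \texttt{InCoord} display $P_\bk\pushf\StressDensity=\StressDensity_i^{\alpha}\,\bk^*\dxY^i\otimes(\parB_\alpha\inc\Vol)$. Applying the coordinate formula for $\frakd_\bk$ established in the previous subsection to the traction stress density $\tau_i^\alpha=\StressDensity_i^\alpha$, we get
\[
\frakd_\bk(P_\bk\pushf\StressDensity)=\brk{(\parB_\alpha\StressDensity_i^\alpha)\,(\jk)^*\dxJY^i+\StressDensity_i^\alpha\,(\jk)^*(\dxJY)_\alpha^i}\otimes\Vol.
\]
Pairing this with $j^1v=v^i\,(\jk)^*\parJY_i+(\parB_\alpha v^i)\,(\jk)^*(\parJY)_i^\alpha$ from \eqref{eq:j1v}, and using the duality relations between the frames $\parJY_i,(\parJY)_i^\alpha$ and the coframes $\dxJY^i,(\dxJY)_\alpha^i$, gives $\frakd_\bk(P_\bk\pushf\StressDensity)\pushf j^1v=\brk{(\parB_\alpha\StressDensity_i^\alpha)v^i+\StressDensity_i^\alpha(\parB_\alpha v^i)}\Vol$, while $\StressDensity\pushf j^1v=\brk{\StressDensity_i v^i+\StressDensity_i^\alpha(\parB_\alpha v^i)}\Vol$. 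Subtracting, the $\StressDensity_i^\alpha(\parB_\alpha v^i)$ terms cancel and we are left with
\[
\frakd_\bk(P_\bk\pushf\StressDensity)\pushf j^1v-\StressDensity\pushf j^1v=\brk{(\parB_\alpha\StressDensity_i^\alpha)-\StressDensity_i}v^i\,\Vol.
\]

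This computation suggests defining $\divergence_\bk\StressDensity$ locally by the body force density field $\ForceDensity$ with $\ForceDensity_i=(\parB_\alpha\StressDensity_i^\alpha)-\StressDensity_i$, i.e. $\divergence_\bk\StressDensity=\brk{(\parB_\alpha\StressDensity_i^\alpha)-\StressDensity_i}\bk^*\dxY^i\otimes\Vol$, so that $(\divergence_\bk\StressDensity)\circ v=\ForceDensity_i v^i\,\Vol$, matching the right-hand side. The work then is to check that this is well-defined globally. The cleanest route is to observe that the right-hand side of \eqref{eq:def_divergence}, namely $v\mapsto \frakd_\bk(P_\bk\pushf\StressDensity)\pushf j^1v-\StressDensity\pushf j^1v$, is manifestly a coordinate-free expression (it is built from the already-established intrinsic operations $P_\bk$, $\frakd_\bk$, and the pairings $\pushf$), and it is pointwise $\scrD(\base)$-linear in $v$: replacing $v$ by $fv$ for $f\in\scrD(\base)$ multiplies $j^1(fv)$ by derivatives of $f$, but the $df$-terms enter only through the $(\parJY)_i^\alpha$-components, and those are paired identically against $\frakd_\bk(P_\bk\pushf\StressDensity)$ and against $\StressDensity$ (the $\StressDensity_i^\alpha$-coefficient is the same in both), so they cancel in the difference. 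Hence $v\mapsto\brk{\cdots}$ is tensorial, i.e. it factors through $v_p\in(\bk^*VY)_p$ and defines a section of $\Hom(\bk^*VY,\LamD)=\bk^*\vals B$, which we call $\divergence_\bk\StressDensity$. This tensoriality argument is the conceptual heart of the proof and replaces any need to verify transition-function compatibility by hand.

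It remains to confirm the operator is first-order linear in the sense of the definition recalled above: linearity in $\StressDensity$ is clear from the local formula (which is affine-linear in $\StressDensity_i,\StressDensity_i^\alpha$ and their first derivatives), and the existence of an associated morphism $\widetilde{\divergence_\bk}\in L(J^1((\jk)^*\stds),\bk^*\vals B)$ follows because the local expression $(\parB_\alpha\StressDensity_i^\alpha)-\StressDensity_i$ depends only on the $1$-jet of $\StressDensity$; one packages $\frakd_\bk\circ P_\bk$ (a first-order operator, being a composition of the vector bundle morphism $P_\bk$ with the first-order operator $\frakd_\bk$) minus the zeroth-order operator $\StressDensity\mapsto$ (contraction of $\StressDensity$ with $j^1(\cdot)$, viewed appropriately) and checks the resulting first-order operator lands in $\Czero(\bk^*\vals B)$ and has the stated associated morphism. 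The one point requiring a little care — and the place I expect to spend the most effort — is making precise that $\StressDensity\pushf j^1v$, a priori an expression involving $j^1v$, genuinely defines a \emph{zeroth-order} operator in $v$ after the difference is taken (individually it is first order in $v$); this is exactly the cancellation of the $\parB_\alpha v^i$ terms observed above, and stating it cleanly in invariant language is the main obstacle. Everything else is bookkeeping with the frames $\parJY_i,(\parJY)_i^\alpha,\parJY_\alpha$ and their duals, together with the coordinate formulas \eqref{eq:j1v}, \eqref{eq:useful3}, \eqref{eq:useful4} already in hand.
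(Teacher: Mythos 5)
Your proposal is correct and follows essentially the same route as the paper: the identical local-coordinate computation yielding $\divergence_\bk\StressDensity=\brk{(\parB_\alpha\StressDensity_i^\alpha)-\StressDensity_i}\,\bk^*\dxY^i\otimes\Vol$ via cancellation of the $\StressDensity_i^\alpha(\parB_\alpha v^i)$ terms. Your added tensoriality argument (pointwise $\scrD(\base)$-linearity of the right-hand side in $v$) makes explicit the global well-definedness that the paper leaves implicit, but it is a refinement of the same proof, not a different one.
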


\begin{proof}
Using coordinates, a variational stress density field has a local representation in the form
\[
\StressDensity=\brk{\StressDensity_i\,(\jk)^*\dxJY^i+\StressDensity_i^{\alpha}\,(\jk)^*(\dxJY)_{\alpha}^i}\otimes\Vol,
\]
and a velocity has a local representation 
\[
v=v^i\,\bk^*\parY_i.
\]
By \eqref{eq:j1v}, 
\[
\StressDensity\pushf j^1v=\brk{\StressDensity_iv^i+\StressDensity_i^{\alpha}\,(\parB_{\alpha}v^i)}\,\Vol.
\]
On the other hand, 
\[
P_\bk\pushf\StressDensity=\StressDensity_i^{\alpha}\,\bk^*\dxY^i\otimes(\parB_{\alpha}\inc\Vol),
\]
and 
\[
\frakd_\bk(P_\bk\pushf\StressDensity)\circ j^1v=\brk{(\parB_{\alpha}\StressDensity_i^{\alpha})v^i+\StressDensity_i^{\alpha}\,(\parB_{\alpha}v^i)}\,\Vol.
\]
Subtracting, we obtain that 
\[
(\divergence_\bk\StressDensity)\pushf v=\brk{(\parB_{\alpha}\StressDensity_i^{\alpha})v^i-\StressDensity_iv^i}\,\Vol.
\]
That is, 
\begin{equation}
\divergence_\bk\StressDensity=\brk{(\parB_{\alpha}\StressDensity_i^{\alpha})-\StressDensity_i}\,\bk^*\dxY^i\otimes\Vol.
\label{eq:div_rep}
\end{equation}
\end{proof}

Consider now a continuous force $f\in T_\bk^*\Q$, given by
\[
f(v)=\int_{\base}\ForceDensity\pushf{}v+\int_{\partial\base}\SurfaceForceDensity\pushf{}v|_{\partial\base},
\]
where $\ForceDensity\in \Czero(\bk^*\vals B)$ and $\SurfaceForceDensity\in \Czero(\bk_{\partial\base}^*\vals T)$. Consider further a continuous stress $\sigma$ represented by a variational stress density field $\StressDensity\in \Cone((j^1\bk)^*\stds)$. Then, \eqref{eq:eq_eq} takes the form
\[
\begin{split}f(v) & =
\sigma(j^1v) \\ 
&=\int_{\base}\StressDensity\pushf{}(j^1v)\\
 & =-\int_{\base}(\divergence_\bk\StressDensity)\pushf v+\int_{\base}(\frakd_\bk (P_\bk\pushf\StressDensity))\pushf{}(j^1v)\\
 & =-\int_{\base}(\divergence_\bk\StressDensity)\pushf v+\int_{\base}d((P_\bk\pushf{}\StressDensity)\pushf{}v)\\
 & =-\int_{\base}(\divergence_\bk\StressDensity)\pushf v+\int_{\partial\base}
 (P_\bk\pushf{}\StressDensity)\pushf{}v.
\end{split}
\]

 
We conclude that $f$ is represented by a variational stress density field $\StressDensity$ if for every $v\in C^1(\bk^*VY)$
\[
\int_{\base}\ForceDensity\pushf{}v+\int_{\partial\base}\SurfaceForceDensity\pushf{}v|_{\partial\base}=-\int_{\base}(\divergence_\bk\StressDensity)\pushf v+\int_{\partial\base}
(P_\bk\pushf{}\StressDensity)\pushf{}v.
\]
It is observed that while the restriction of the various terms in the integrand $ (P_\bk\pushf{}\StressDensity)\pushf{}v $ to $ \partial\base $ and to vectors tangent to $ \partial\base $ is implied in integration theory, formally, in view of (\ref{eq:CauchyRest}) we write the integrand as 
$\cres_{\partial\base,\bk}(P_\bk\circ\StressDensity)\circ v$.
Thus, since equality holds for every $v$, we obtain  
\begin{equation}
\label{eq1000}
\divergence_\bk\StressDensity+\ForceDensity=0\,\, \text{on }\base\quad\text{and }\quad  \SurfaceForceDensity=
\cres_{\partial\base,\bk}(P_\bk\pushf{}\StressDensity)\,\,\text{on }\partial\base.
\end{equation}
In coordinates, equation \eqref{eq1000} transforms to an underdetermined set of $d$ equations for the $(d m+m)$ components of $\StressDensity$.

Note that the above proposition holds  if the
vector bundle $\bk^*VY\to\base$ is replaced by an arbitrary vector bundle $\rho:W\to\base$, in which
case $\divergence:L(J^1\rho,\LamD)\tto L(W,\LamD)$, which is compatible
with \cite{Segev2002}.


\section{The Continuum Mechanics Problem}
\label{sec:Constitutive}

It is well known that a given force distribution, does not determine
a unique stress distribution. The source of this non-uniqueness may
be traced back to Section~\ref{sec5p1}.
A unique stress field is determined when additional
information in the form of a constitutive relation is provided. This
couples the statics problem described above with the kinematics. 
The
basic notions corresponding to the introduction of constitutive relations
are discussed in this section.

Generally, the continuum mechanics problem takes the following form: the system under consideration is subject to external forces, usually dictated by a loading, which is an assignment of a force to every admissible configuration. A loading may consist of a body force component and a surface force component, and it may be singular or continuous.
In analogy, a constitutive relation assigns stresses, regular or singular, to deformation jets, in particular, to jets of configurations. The continuum mechanics problem seeks a configuration $ \bk $ such that the stress associated through the constitutive relation with $ j^1\bk $ represents the force assigned by the loading to $ \bk $. 

\subsection{Loadings}

\begin{defn}
\label{def:loading} A \Emph{loading}, $F$, is a one-form on the configuration
space assigning a force $F_{\bk}\in T_{\bk}^*\Q$ to every configuration $\bk\in\Q$.
\end{defn}


Definition~\ref{def:smooth_force} of continuous forces extends to a
definition of continuous loadings:

\begin{defn}
\label{def:smooth_loading} A loading $F$ is termed \Emph{continuous}
if there exists a \Emph{body loading density}
\[
\LoadingDensity\in \Czero(\pi_{\vals B}),
\]
and a \Emph{surface
loading density} 
\[
\SurfaceLoadingDensity\in \Czero(\pi_{\vals T}),
\]
such that for every $\bk\in\Q$ and $v\in T_{\bk}\Q\simeq\Cone(\bk^*VY)$,
\[
F_{\bk}(v)=\int_{\base} \bk^*\LoadingDensity\pushf v+
\int_{\partial\base}\bk_{\partial\base}^*\SurfaceLoadingDensity\pushf v|_{\partial\base},
\]
That is, $F_{\bk}$ is continuous with body force density field $\bk^*\LoadingDensity\in \Czero(\bk^*\vals B)$
and surface force density field $\bk_{\partial\base}^*\SurfaceLoadingDensity\in \Czero(\bk_{\partial\base}^*\vals T)$.
\end{defn}

Note that by our conventions, we write $\Czero(\pi_{\vals B})$ rather than $\Czero(\vals B)$, because $\vals B$ has multiple bundle structures, and the domain of those sections is not the base manifold, $\base$.

\InCoord{A body loading density has local representation, 
\[
\LoadingDensity=\LoadingDensity_i\,\dxY^i\otimes_{Y}\pi^*\Vol
\]
where $\LoadingDensity_i$ are real valued continuous functions defined locally on $Y$. 

Assume that the chart $X$ is adapted to $\partial\base$ so that
$(X^1,\dots,X^{d-1})$ is a chart on $\partial\base$ and $\parB_{d}$
is transversal to $T\partial\base$. Then, one has a volume element $\parB_{d}\inc\Vol$
on $\partial\base$, and locally,
\[
\SurfaceLoadingDensity=\SurfaceLoadingDensity_i\,\dxY^i\otimes_{Y\resto{\partial\base}}\pi\resto{\partial\base}^*(\parB_{d}\inc\Vol)
\]
where $\SurfaceLoadingDensity_i$ are continuous functions defined locally on $Y|_{\partial\base}$.
Let 
\[
v=v^i\,\bk^*\parY_i
\]
be a local representation of a velocity at $\bk$. As 
\[
\bk^*\LoadingDensity=(\bk^*\LoadingDensity_i)\,\bk^*\dxY^i\otimes_{\base}\Vol
\textand
\bk_{\partial\base}^*\SurfaceLoadingDensity=\bk_{\partial\base}^*\SurfaceLoadingDensity_i\,\bk_{\partial\base}^*\dxY^i\otimes_{\partial\base}(\parB_{d}\inc\Vol).
\]
we have the local expression, 
\[
(\bk^*\LoadingDensity)\pushf v=v^i\,(\bk^*\LoadingDensity_i)\,\Vol,\qquad\bk_{\partial\base}^*\SurfaceLoadingDensity\circ(v_{\partial\base})=v^i\,\bk_{\partial\base}^*\SurfaceLoadingDensity_i(\parB_{d}\inc\Vol).
\]
If the support of $v$ may be covered by a single chart, we may write $F_\bk$ explicitly,
\[
F_{\bk}(v)=\int_{\base}v^i\,(\bk^*\LoadingDensity_i)\,\Vol+
\int_{\partial\base}v^i\,(\bk_{\partial\base}^*\SurfaceLoadingDensity_i)(\parB_{d}\inc\Vol).
\]
} 

\begin{defn}
\label{def:conservative_loading} 
A loading $F$ 
is  termed \Emph{conservative} if
\[
F=-dW,
\]
 for some \Emph{loading potential},
\[
W\in \Cone(\Q).
\]
\end{defn}

\begin{defn}
A conservative loading is \Emph{continuous} if there exists a \Emph{body
loading potential density}, which is a $\Cone$-section
\begin{equation}
w_{\base}: Y\to \pi^*\LamD,
\label{eq:wbase}
\end{equation}
and a \Emph{boundary loading potential density}, which is a $\Cone$-section
\begin{equation}
w_{\partial\base}: Y\resto{\partial\base} \to (\pi|_{\partial\base})^*\ext^{d-1}T^*\partial\base,
\label{eq:wparbase}
\end{equation}
such that the loading potential is given by 
\[
W(\bk)=\int_{\base}\bk^*w_{\base}+\int_{\partial\base}\bk_{\partial\base}^*w_{\partial\base}
\]
(see diagrams).

\[ \begin{xy}
(0,0)*+{\base}="X1";
(30,0)*+{Y}="Y";
(0,20)*+{\LamD}="TX1";
(30,20)*+{\pi^*\LamD}="TY";
{\ar@{-->}@/_{1pc}/_{\bk}"X1";"Y"};
{\ar@{->}_{\pi}"Y";"X1"};
{\ar@{->}^{(\tau^*_{\base})^*{\bk}}"TX1";"TY"};
{\ar@{->}^{\tau^*_{\base}}"TX1";"X1"};
{\ar@{->}_{\pi^*\tau^*_{\base}}"TY";"Y"};
{\ar@{-->}@/_{1pc}/_{{w_\base}}"Y";"TY"};
{\ar@{-->}@/^{1pc}/^{{\bk^*w_{\base}}}"X1";"TX1"};
\end{xy} 
\quad
\begin{xy}
(0,0)*+{\partial\base}="X1";
(40,0)*+{\iota_{\partial\base}^*Y}="Y";
(0,20)*+{\hspace*{-4mm}%
            \ext^{d-1}T^*\partial\base}="TX1";
(40,20)*+{(\iota_{\partial\base}^*\pi)^*
        \ext^{d-1}T^*\partial\base}="TY";
{\ar@{-->}@/_{1pc}/_{\bk_{\partial\base}}"X1";"Y"};
{\ar@{->}_{\iota_{\partial\base}^*\pi}"Y";"X1"};
{\ar@{->}^(0.37){(\tau^*_{\base})^*{\bk}}"TX1";"TY"};
{\ar@{->}^{\tau^*_{\partial\base}}
     "TX1";"X1"};
{\ar@{->}_{(\iota_{\partial\base}^*\pi)^*
        \tau^*_{\partial\base}}"TY";"Y"};
{\ar@{-->}@/_{1pc}/_(0.5){{w_{\partial\base}}}"Y";"TY"};
{\ar@{-->}@/^{1pc}/^{{\bk_{\partial\base}^*w_{\base}}}"X1";"TX1"};
\end{xy}
\] 
\end{defn}

\begin{prop}
Consider a continuous conservative loading $F$, the potential function of which is given by a body loading potential density $w_{\base}$ and boundary loading potential $w_{\partial\base}$.
Then, $F$ is a continuous loading with corresponding loading densities
\[
\LoadingDensity=-dw_{\base}|_{VY},\qquad\SurfaceLoadingDensity=-dw_{\partial\base}|_{VY|_{\partial\base}}.
\]
\end{prop}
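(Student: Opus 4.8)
The plan is to differentiate the potential $W(\bk)=\int_{\base}\bk^*w_{\base}+\int_{\partial\base}\bk_{\partial\base}^*w_{\partial\base}$ along a curve of configurations and to recognize the resulting expression as $-F_\bk(v)$ in the form of Definition~\ref{def:smooth_loading}. Fix $\bk\in\Q$ and $v\in T_\bk\Q\simeq\Cone(\bk^*VY)$, and consider the canonical path $\gamma(t)=\chi_\bk(tv)$, so that $(\gamma(t))_p=\exp^Y(t\,v_p)$. Then $F_\bk(v)=-(dW)_\bk(v)=-\frac{d}{dt}\Big|_{t=0}W(\gamma(t))$.

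First I would handle the body term. Since the volume integral $\int_\base(\gamma(t))^*w_\base$ depends on $t$ only through the $Y$-valued map $p\mapsto\exp^Y(t v_p)$, and since differentiation under the integral sign is justified by compactness of $\base$ and smoothness of $w_\base$ and $\exp^Y$, I would write
\[
\frac{d}{dt}\Big|_{t=0}\int_\base(\gamma(t))^*w_\base=\int_\base \bk^*\brk{dw_\base\circ \dot v},
\]
where $\dot v$ is the vertical tangent vector field along $\bk$ corresponding to $v$, i.e.\ $\dot v_p=\frac{d}{dt}\big|_{t=0}\exp^Y(t v_p)\in(VY)_{\bk_p}$; under the standard identification this is just $v_p$ itself. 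The pointwise derivative $\frac{d}{dt}\big|_{t=0}w_\base(\exp^Y(tv_p))$ is, by the chain rule, $dw_\base$ applied to a vertical vector, which is exactly $dw_\base|_{VY}$ evaluated on $v_p$. Hence the body term equals $\int_\base \bk^*(dw_\base|_{VY})\circ v=-\int_\base\bk^*\LoadingDensity\circ v$ with $\LoadingDensity=-dw_\base|_{VY}$, matching the claimed body loading density. The boundary term is treated identically, replacing $\base$ by $\partial\base$, $w_\base$ by $w_{\partial\base}$, $VY$ by $VY|_{\partial\base}$, and $v$ by $v|_{\partial\base}$, yielding $\SurfaceLoadingDensity=-dw_{\partial\base}|_{VY|_{\partial\base}}$; here one should note that restricting to $\partial\base$ commutes with the differentiation because $\gamma(t)|_{\partial\base}$ is again the canonical curve through $\bk_{\partial\base}$ generated by $v|_{\partial\base}$.

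Assembling the two pieces gives $F_\bk(v)=\int_\base\bk^*\LoadingDensity\circ v+\int_{\partial\base}\bk_{\partial\base}^*\SurfaceLoadingDensity\circ v|_{\partial\base}$ for all $\bk$ and all $v$, which is precisely Definition~\ref{def:smooth_loading}; continuity of $\LoadingDensity$ and $\SurfaceLoadingDensity$ is inherited from $C^1$-smoothness of $w_\base$ and $w_{\partial\base}$, since the vertical restriction of the differential of a $C^1$ map is $C^0$. The main technical point — and the only step deserving care — is the justification of differentiation under the integral sign and the identification of the pointwise $t$-derivative with the vertical differential $dw_\base|_{VY}$ acting on $v$; this is where the choice of the spray-induced exponential chart $\chi_\bk$ and the identification $T_\bk\Q\simeq\Cone(\bk^*VY)$ from \eqref{eq:canon2} are used. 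In coordinates this is immediate: with $w_\base=w_i\,\dxY^i\otimes_Y\pi^*\Vol$ and $v=v^i\bk^*\parY_i$ one gets $\bk^*w_\base$ differentiating to $(\bk^*\parY_i w)\,v^i\,\Vol=(\bk^*(dw_\base|_{VY})\circ v)$, so the identification $\LoadingDensity_i=-\bk^*\parY_i w$ is transparent, and the coordinate computation can be invoked to close the argument rigorously.
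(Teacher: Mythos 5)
Your proposal is correct and follows essentially the same route as the paper: differentiate $W$ along a curve through $\bk$ with velocity $v$, exchange the $t$-derivative with the integrals, and identify the pointwise derivative via the chain rule with $dw_{\base}|_{VY}$ (resp.\ $dw_{\partial\base}|_{VY|_{\partial\base}}$) acting on $v$. The one step the paper makes explicit that you pass over silently is the type-check that $-dw_{\base}|_{VY}$ is in fact a loading density: a priori $dw_{\base}|_{VY}$ is a section of $\Hom(VY,w_{\base}^*V(\pi^*\LamD))$, and one needs the canonical identification $(w^*V\pi_E)_a\simeq(V\pi_E)_{w(a)}\simeq E_a$ for a section $w$ of a vector bundle $E\to Y$ to regard it as a section of $\Hom(VY,\pi^*\LamD)$; the same identification is what lets you view the pointwise $t$-derivative of $((\gamma(t))^*w_{\base})_p$, a curve in the fibers of $\pi^*\LamD$ over $Y_p$, as an element of $(\LamD)_p$. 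This is routine and your coordinate computation confirms it, but it should be stated to make the displayed formulas for $\LoadingDensity$ and $\SurfaceLoadingDensity$ well-defined.
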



\begin{proof}
First, we should verify that this relation is well-defined type-wise.
%
%
By the types \eqref{eq:wbase} and \eqref{eq:wparbase} of $w_\base$ and $w_{\partial\base}$, 
it follows that 
\[
\begin{split}
dw_{\base} & \in \Czero(\Hom(TY,w_{\base}^*T(\pi^*\LamD))),\\
dw_{\partial\base} & \in \Czero(\Hom(TY|_{\partial\base},w_{\partial\base}^*T((\pi|_{\partial\base})^*\ext^{d-1}T^*\partial\base))).
\end{split}
\]
Restricting to the vertical bundle, 
\[
\begin{split}dw_{\base}|_{VY} & \in \Czero(\Hom(VY,w_\base^*V(\pi^*\ext^{d}\tau_{\base}^*)))\\
dw_{\partial\base}|_{VY|_{\partial\base}} & \in \Czero(\Hom(VY|_{\partial\base},w_{\partial\base}^*V((\pi|_{\partial\base})^*\ext^{d-1}\tau^*_{\partial\base}))).
\end{split}
\]
For every vector bundle $(E,\pi_E,Y)$, section $w:Y\to E$
and $a\in Y$, 
\[
(w^*V\pi_E)_{a}\simeq(V\pi_E)_{w(a)}\simeq E_{a}.
\]
In other words, $w^*V\pi_E\simeq E$, and so, 
\[
\begin{split}
\Hom(VY,w_\base^*V(\pi^*\ext^{d}\tau_{\base}^*)) & \simeq \Hom(VY,\pi^*\LamD),\\
\Hom(VY|_{\partial\base},w_{\partial\base}^*V((\pi|_{\partial\base})^*\ext^{d-1}\tau^*_{\partial\base})) & \simeq \Hom(VY|_{\partial\base},(\pi|_{\partial\base})^*\ext^{d-1}T^*\partial\base),
\end{split}
\]
and so, $dw|_{VY}$ and $dw_{\partial\base}|_{VY|_{\partial\base}}$
are indeed loading densities.

It remains to show that $dw_{\base}|_{VY}$ and $dw_{\partial\base}|_{VY|_{\partial\base}}$
are the loading densities corresponding to $F=-dW$. By definition,
for $v\in T_{\bk}\Q\simeq\Cone(\bk^*VY)$, 
\[
\begin{split}dW_{\bk}(v)=\left.\deriv{}t\right|_{0}W\circ\gamma,\end{split}
\]
where $\gamma:I\to\Q$ satisfies $\gamma(0)=\bk$ and $\dot{\gamma}(0)=v$.
Substituting the representation of $W$ in terms of a density, 
\[
\begin{split}dW_{\bk}(v) & =\left.\deriv{}t\right|_{0}\left[\int_{\base}(\gamma(t))^*w_{\base}+\int_{\partial\base}(\gamma(t)_{\partial\base})^*w_{\partial\base}\right],\\
 & =\int_{\base}\left.\deriv{}t\right|_{0}(\gamma(t))^*w_\base+\int_{\partial\base}\left.\deriv{}t\right|_{0}(\gamma(t)_{\partial\base})^*w_{\partial\base},
\end{split}
\]
where the derivatives $\left.\deriv{}t\right|_{0}(\gamma(t))^*w_{\base}$
and $\left.\deriv{}t\right|_{0}(\gamma(t)_{\partial\base})^*w_{\partial\base}$
are taken pointwise in $\base$ and $\partial\base$,
respectively.

It follows from the chain rule and the definition of $\gamma$ that 
\[
\left.\deriv{}t\right|_{0}((\gamma(t))^*w_{\base})_{p}=(dw_{\base})_{\bk_{p}}\circ(\dot{\gamma}(0))_{p}=(\bk^*dw_{\base}|_{VY}\circ v)_{p},
\]
and
\[
\left.\deriv{}t\right|_{0}((\gamma(t)_{\partial\base})^*w_{\partial\base})_{q}=(dw_{\partial\base})_{\bk_{q}}\circ(\dot{\gamma}(0))_{q}=((\bk_{\partial\base})^*dw_{\partial\base}|_{VY|_{\partial\base}}\circ v)_{q}.
\]
Thus, 
\[
dW_{\bk}(v)=\int_{\base}(\bk^*dw_{\base}|_{VY})\pushf v+\int_{\partial\base}((\bk_{\partial\base})^*dw_{\partial\base}|_{VY|_{\partial\base}})\pushf v,
\]
which completes the proof. 
\end{proof}


\InCoord{ In a local coordinate frame, a loading potential density takes
 the form, 
\[
w_{\base}=\omega_{\base}\,\pi^*\Vol
\Textand 
w_{\partial\base}=\omega_{\partial\base}\,\pi\resto{\partial\base}^*(\parB_{d}\inc\Vol).
\]
where $\omega_{\base}$ and $\omega_{\partial\base}$ are differentiable
functions defined on the domains of the adapted charts on $\base$
and $\partial\base$, respectively. Since the exterior derivatives
of the volume elements vanish identically, the loading densities are
given locally by
\[
\LoadingDensity=-\partial_i\omega_{\base}\,\,\dxY^i\otimes_{Y}\pi^*\Vol,\qquad\SurfaceLoadingDensity=-\partial_i\omega_{\partial\base}\,\dxY^i\otimes_{Y\resto{\partial\base}}\pi\resto{\partial\base}^*(\parB_{d}\inc\Vol).
\]
} 


\subsection{Constitutive relations}

All the notions analyzed above\textemdash configurations, velocities,
forces, {\strn}s and stresses\textemdash pertain to a particular
theory depending on the choice of configuration space. Within a particular
theory, the  properties of a system are prescribed using a constitutive
relation, associating the state-of-stress with the kinematics.

It is noted that since our base space $\base$ may be interpreted
naturally as physical space-time, constitutive relations may involve time
even in cases where it is not explicitly indicated.

\begin{defn}
A \Emph{constitutive relation} is a one-form $\crel:\strain\to T^*\strain$ on the manifold
of deformation jets, 
assigning a stress $\crel_{\xi}\in T_{\xi}^*\strain$ to every deformation jet $\xi\in\strain$. 
\end{defn}

Since $j^1:\Q\to\strain$ is injective, a constitutive relation
$\crel$ induces a loading 
\[
(j^1)^{\push}\crel:\Q\tto T^*\Q.
\]
By the definition of the pullback of forms, for $v\in T_{\bk}\Q$,
\[
((j^1)^{\push}\crel)_{\bk}(v)=((j^1)^*\crel)_{\bk}(j^1v)=\crel_{\jk}(j^1v),
\]
where in the first identity we identified, as before, $dj^1(v)$ with $j^1v$.

We note that at this level of generality, and interpreting $\base$
as space-time, this, seemingly elastic constitutive relation, is actually
non-local, time dependent, and not necessarily causal.

The definition of continuous stresses extends readily to a definition
of $C^k$-differentiable constitutive relations:

\begin{defn}
A constitutive relation $\crel:\strain\to T^*\strain$ is termed $C^k$-\Emph{differentiable}, $ k=0,1,\dots $,
if there exists a \Emph{constitutive density}, a differentiable
section, 
\[
\ConstitutiveDensity\in C^k(\pi_{\stds}),
\]
such that for every {\strn} $\xi\in\strain$ and for every {\vjet}
 $\eta$ at $\xi$, 
\[
\crel_{\xi}(\eta)=\int_{\base}(\xi^*\ConstitutiveDensity)\pushf{}\eta.
\]
\end{defn}

In the case of a differentiable constitutive relation, the induced loading is given by
\[
((j^1)^\push\crel)_\bk(v)=\crel_{\jk}(j^1v)=\int_{\base}((\jk)^*\ConstitutiveDensity)\pushf{}(j^1v),\quad v\in T_{\bk}\Q.
\]
We note that a differentiable constitutive relation is local and elastic in
the sense that the stress at a point depends only on the deformation jet
at that point. Again, we mention that for the case where $\base$
is interpreted as space-time or body-time, $\xi$ contains components
that may reflect velocities. However, history dependence is excluded.

\InCoord{ In coordinates, a constitutive density is
represented locally as 
\[
\ConstitutiveDensity=\brk{\ConstitutiveDensity_i\,\dxJY^i+\ConstitutiveDensity_i^{\alpha}\,(\dxJY)_{\alpha}^i}\otimes_{J^1Y}(\pi^1)^*\Vol,
\]
where $\ConstitutiveDensity_i,\ConstitutiveDensity_i^{\alpha}$ are $C^k$ over the domain of a chart in $J^1Y$.
} 

\subsection{The continuum mechanics equations}

Given a constitutive relation $\crel$ and a loading $F$, the continuum
mechanics problem seeks a configuration for which the stress determined
by the constitutive relation represents the loading for that configuration.
Thus, the condition is expressed by
\begin{equation}
(j^1)^{\push}\crel=F.
\label{eq:eq_eq-1}
\end{equation}
Explicitly, a configuration $\bk\in\Q$ is a solution of \eqref{eq:eq_eq-1}
if for every virtual velocity $v$ at $\bk$, 
\[
\crel_{\jk}(j^1v)=F_{\bk}(v).
\]

\subsection{The differential form of the continuum mechanics equations}

Consider the continuum mechanics problem in the case of a $C^k$-constitutive density $\ConstitutiveDensity$
and a continuous loading with densities $\LoadingDensity$ and
$\SurfaceLoadingDensity$. A configuration $\bk\in\Q$ which solves
the continuum mechanics problem satisfies
\begin{equation}
\int_{\base} (j^1\bk)^*\ConstitutiveDensity\circ j^1v =
\int_{\base}\bk^* \LoadingDensity\pushf v+
\int_{\partial\base}\bk_{\partial\base}^*\SurfaceLoadingDensity\pushf(v\resto{\partial\base}).
\label{eq:integrand}
\end{equation}
for every velocity field $v\in T_{\bk}\Q\simeq\Cone(\bk^*VY)$.

In view of the definition of the divergence, restricting ourselves to $C^2$-configurations and $C^1$ constitutive densities, we can rewrite these
equations as follows: 
\begin{multline*}
-\int_{\base}\divergence_\bk ((j^1\bk)^*\ConstitutiveDensity)\pushf v 
+ \int_{\partial\base}
((j^1\bk)^*(P\pushf \ConstitutiveDensity))\pushf v 
\\
=\int_{\base}\bk^*\LoadingDensity\pushf v + \int_{\partial\base}\bk_{\partial\base}^* \SurfaceLoadingDensity\pushf(v\resto{\partial\base}).
\end{multline*}
Since this hold for every virtual velocity $v$, we obtain a boundary-value
problem 
\[
\begin{gathered}
\divergence_\bk ((j^1\bk)^*\ConstitutiveDensity) + \bk^*\LoadingDensity=0,\qquad\text{in \ensuremath{\base},}\\
\cres_{\partial\base,\bk}
\circ ((j^1\bk)^*(P\pushf \ConstitutiveDensity))
= 
\bk_{\partial\base}^*\SurfaceLoadingDensity,\qquad\text{on \ensuremath{\partial\base}.}
\end{gathered}
\]

\InCoord{ Let the loading densities and the constitutive density
be given locally by 
\[
\LoadingDensity=\LoadingDensity_i\,\dxY^i\otimes_{Y}\pi^*\Vol,\qquad\SurfaceLoadingDensity=\SurfaceLoadingDensity_i\,\dxY^i\otimes_{Y\resto{\partial\base}}\pi\resto{\partial\base}^*(\parB_{d}\inc\Vol),
\]
and 
\[
\ConstitutiveDensity=\brk{\ConstitutiveDensity_i\,\dxJY^i+\ConstitutiveDensity_i^{\alpha}\,(\dxJY)_{\alpha}^i}\otimes_{J^1Y}(\pi^1)^*\Vol,
\]
Then, 
\[
\bk^*\LoadingDensity=(\bk^*\LoadingDensity_i)\,\bk^*\dxY^i\otimes_{X}\Vol,
\qquad
\bk_{\partial\base}^*\SurfaceLoadingDensity=(\bk\resto{\partial\base})^*\SurfaceLoadingDensity_i\,(\bk\resto{\partial\base})^*\dxY^i\otimes_{\partial\base}(\parB_{d}\inc\Vol)
\]
and 
\[
(j^1\bk)^*\ConstitutiveDensity = \brk{((\jk)^*\ConstitutiveDensity_i)\,(\jk)^*\dxJY^i+((\jk)^*\ConstitutiveDensity_i^{\alpha})\,(\jk)^*(\dxJY)_{\alpha}^i}\otimes_{J^1Y}\Vol.
\]
If follows from the local expressions derived in Sections \ref{tsivs} and \ref{defe} that 
\[
(j^1\bk)^*(P\pushf\ConstitutiveDensity) =
((j^1\bk)^*\ConstitutiveDensity_i^\alpha)\,\bk^*dx^i_Y\otimes (\partial^\base_\alpha\inc\Vol),
\]
and
\[
\divergence_\bk(j^1\bk\ConstitutiveDensity)=\brk{(\jk)^*\ConstitutiveDensity_i-\parB_{\alpha}((\jk)^*\ConstitutiveDensity_i^{\alpha})}\,\bk^*\dxY^i\otimes\Vol,
\]
so that the field equation is 
\[
(\jk)^*\ConstitutiveDensity_i-\parB_{\alpha}((\jk)^*\ConstitutiveDensity_i^{\alpha})+(\bk^*\LoadingDensity_i)=0.
\]
Since
\begin{equation}
\iota_{\partial\base}^\push(\parB_{\alpha}\inc\Vol)=\begin{cases}
0, & \text{for }\alpha\ne d,\\
\parB_{d}\inc\Vol, & \text{for }\alpha=d,
\end{cases}
\end{equation}
the boundary conditions are
\[
(\jk)^*\ConstitutiveDensity_i^{d}=(\bk\resto{\partial\base})^*\SurfaceLoadingDensity_i.
\]
} 

\subsection{Conservative constitutive relations}

\begin{defn}
A constitutive relation is referred to as \Emph{conservative} if there exists an energy
function 
\[
\sten\in \Cone(\strain)
\]
such that the constitutive relation is given by 
\[
\crel=d\sten.
\]
\end{defn}

Consider a conservative system with energy function $\sten$, subject
to a conservative loading $F=-dW$. Then, the equilibrium equations
are 
\[
\begin{split}
(j^1)^{\push}\crel-F & =(j^1)^*\crel\circ dj^1-F\\
 & =(j^1)^*d\sten\circ dj^1+dW\\
 & =d(\sten\circ j^1+W)=0,
\end{split}
\]
where we used the commutativity of external derivatives and pullbacks. Thus, the solution
is a critical point of the \Emph{total energy}, 
\begin{equation}
\sten\circ j^1+W.\label{eq:calF}
\end{equation}

\begin{defn}
A conservative constitutive relation is said to be \Emph{differentiable} if there exists a \Emph{Lagrangian
density form},
a $C^2$-section $$\calL:J^1Y\to (\pi^1)^*\LamD,$$
such that the energy function is of the form 
\[
\sten(\xi)=\int_{\base}\xi^*\calL.
\]
\end{defn}
\[
\begin{xy}(0,0)*+{\base}="X";
(30,0)*+{J^1Y}="J1Y";
(0,20)*+{\LamD}="Vol";
(30,20)*+{(\pi^1)^*\LamD}="J1Vol";
{\ar@{->}_{\xi}"X";"J1Y"};
{\ar@{->}_{\tau_{\base}^*}"Vol";"X"};
{\ar@{->}^{}"Vol";"J1Vol"};
{\ar@{->}^{(\pi^1)^*\tau_{\base}^*}"J1Vol";"J1Y"};
{\ar@{-->}@/_{4pc}/_{{\calL}}"J1Y";"J1Vol"};
{\ar@{-->}@/^{3pc}/^{{\xi^*\calL}}"X";"Vol"};
\end{xy}
\]

Evidently, differentiable conservative constitutive relations generalize hyperelastic constitutive relations to the setting considered in this paper.

\begin{prop}
Consider a differentiable constitutive relation with a Lagrangian density $\calL$.
The resulting constitutive relation is $C^1$ with a constitutive density
\begin{equation}
\ConstitutiveDensity=d\calL|_{VJ^1Y}.\label{eq:smooth_hyper}
\end{equation}
Equation \eqref{eq:smooth_hyper} is a generalization of the classical relation between
the stress and the derivative of the elastic energy density. 
\end{prop}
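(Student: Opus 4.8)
The plan is to follow, almost verbatim, the argument used above for continuous conservative loadings: differentiate the energy functional under the integral sign and apply the chain rule pointwise on $\base$. First I would verify that the proposed formula is well-typed. Since $\calL:J^1Y\to(\pi^1)^*\LamD$ is a $C^2$-section, $d\calL$ is a $C^1$-section of $\Hom(TJ^1Y,\calL^*T((\pi^1)^*\LamD))$, and its vertical restriction $d\calL|_{VJ^1Y}$ is a $C^1$-section of $\Hom(VJ^1Y,\calL^*V((\pi^1)^*\ext^d\tau_\base^*))$. Invoking the canonical isomorphism $w^*V\pi_E\simeq E$ — valid for any vector bundle $(E,\pi_E,Z)$ with section $w:Z\to E$, as established in the proof of the proposition on continuous conservative loadings above — with $Z=J^1Y$, $E=(\pi^1)^*\LamD$ and $w=\calL$, we get
\[
\Hom(VJ^1Y,\calL^*V((\pi^1)^*\ext^d\tau_\base^*))\simeq\Hom(VJ^1Y,(\pi^1)^*\LamD)=\stds ,
\]
so that $\ConstitutiveDensity:=d\calL|_{VJ^1Y}\in C^1(\pi_\stds)$, which is precisely the regularity demanded of a $C^1$-constitutive density. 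I would also record that, $\base$ being compact and $\calL$ being $C^2$, the functional $\sten(\xi)=\int_\base\xi^*\calL$ is $C^1$ on $\strain$, so $\crel=d\sten$ is a genuine one-form on $\strain$.

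The core computation is the following. Fix a deformation jet $\xi\in\strain$ and a velocity jet $\eta\in T_\xi\strain\simeq\Czero(\xi^*VJ^1Y)$, and represent $\eta$ by a path $\gamma:I\to\strain$ with $\gamma(0)=\xi$ and $\dot\gamma(0)=\eta$. Since each $\gamma(t)$ is a section over the fixed base $\base$, for every $p\in\base$ the curve $t\mapsto\gamma(t)_p$ lies in the fiber $(J^1Y)_p$, so its velocity at $t=0$ is the vertical vector $\eta_p\in(VJ^1Y)_{\xi_p}$. By the definition of $\sten$ through its density,
\[
\crel_\xi(\eta)=(d\sten)_\xi(\eta)=\left.\deriv{}{t}\right|_0\int_\base(\gamma(t))^*\calL ,
\]
and, differentiating under the integral sign, the integrand is evaluated pointwise; the pointwise chain rule yields, at $p\in\base$,
\[
\left.\deriv{}{t}\right|_0((\gamma(t))^*\calL)_p=(d\calL)_{\xi_p}(\eta_p)=(d\calL|_{VJ^1Y})_{\xi_p}(\eta_p)=((\xi^*\ConstitutiveDensity)\pushf\eta)_p ,
\]
where the middle equality uses that $\eta_p$ is vertical. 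Hence $\crel_\xi(\eta)=\int_\base(\xi^*\ConstitutiveDensity)\pushf\eta$, which is exactly the defining relation of a $C^1$-differentiable constitutive relation with constitutive density $\ConstitutiveDensity=d\calL|_{VJ^1Y}$.

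The one genuinely technical point — and the step I expect to be the main (though mild) obstacle — is justifying the interchange of $\deriv{}{t}$ and $\int_\base$; this is the same standard estimate used in the conservative-loading proposition (uniform bounds on $\partial_t(\gamma(t))^*\calL$, obtained in the charts $\chi_\xi$ of $\strain$ using compactness of $\base$ and the $C^2$-regularity of $\calL$), and presents no real difficulty. As a sanity check, I would corroborate the formula in coordinates: writing $\calL=\ell\,(\pi^1)^*\Vol$ for a $C^2$ function $\ell$ on the domain of a chart $z=(X,x,x')$ on $J^1Y$, one finds that only the vertical directions $\parJY_i$ and $(\parJY)_i^\alpha$ survive the restriction, giving
\[
d\calL|_{VJ^1Y}=\Bigl(\tfrac{\partial\ell}{\partial x^i}\,\dxJY^i+\tfrac{\partial\ell}{\partial{x'}_\alpha^i}\,(\dxJY)_\alpha^i\Bigr)\otimes_{J^1Y}(\pi^1)^*\Vol ,
\]
which recovers the classical identification of the stress components $\ConstitutiveDensity_i,\ConstitutiveDensity_i^\alpha$ with the partial derivatives of the Lagrangian density.
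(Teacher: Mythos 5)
Your proof is correct and follows essentially the same route as the paper's: differentiate $\sten$ along a path under the integral sign and apply the pointwise chain rule, using that the variation is vertical. The extra material you supply (the type-check via $w^*V\pi_E\simeq E$ and the coordinate verification) mirrors what the paper does in the analogous conservative-loading proposition and in the coordinate block following the proof, and your working at a general $\xi\in\strain$ rather than only at holonomic jets $j^1\bk$ is a harmless (indeed slightly more complete) variant.
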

\begin{proof}
By definition, for $\eta\in T_{\jk}\strain\simeq\Czero((\jk)^*VJ^1Y)$,
\[
\begin{split}d\sten_{\jk}(\eta)=\left.\deriv{}t\right|_{0}\sten(\gamma(t)),\end{split}
\]
where $\gamma:I\to\strain$ satisfies $\gamma(0)=\jk$ and $\dot{\gamma}(0)=\eta$.
Substituting the representation of $\sten$ in terms of a density,
\[
d\sten_{\jk}(\eta)=\left.\deriv{}t\right|_{0}\int_{\base}(\gamma(t))^*\calL=\int_{\base}\left.\deriv{}t\right|_{0}(\gamma(t))^*\calL,
\]
where the derivative $\left.\deriv{}t\right|_{0}(\gamma(t))^*\calL$
is taken pointwise at every $p\in\base$. It follows from the chain rule
and the definitions of $\gamma$ and $\eta$ that 
\[
\left.\deriv{}t\right|_{0}(\gamma(t))^*\calL)_{p}=d\calL_{j_{p}^1\bk}\circ(\dot{\gamma}(0))_{p}=(d\calL|_{VJ^1Y})_{j_{p}^1\bk}\circ\eta_{p},
\]
\ie, 
\[
d\sten_{\jk}(\eta)=\int_{\base}((\jk)^*d\calL|_{VJ^1Y})\circ\eta,
\]
which completes the proof. 
\end{proof}
\InCoord{ Using a coordinate frame, a Lagrangian density is of the form
\[
\calL=L\,\,(\pi^1)^*\Vol.
\]
where $L\in C^{\infty}(J^1Y)$. Then, 
\[
\ConstitutiveDensity=\brk{(\parJY_iL)\,\dxJY^i+((\parJY)_i^{\alpha}L)\,(\dxJY)_{\alpha}^i}\otimes_{J^1Y}(\pi^1)^*\Vol,
\]
It follows that 
\[
\ConstitutiveDensity_i=\parJY_iL\Textand\ConstitutiveDensity_i^{\alpha}=(\parJY)_i^{\alpha}L.
\]
} 

\section{Some Special Cases}\label{sec:Examples}
We present here a number of examples where the general settings assume particularly useful forms.
\subsection{Vector bundles\label{subsec:Vector-bundles}}
Consider the case where $(\bdl,\pi,\base)$ is a vector bundle. By
the natural isomorphism $T_{w}\vs\simeq\vs$ for any vector space
$\vs$, it follows that 
\begin{equation}
(VY)_{y}\simeq T_{y}(\bdl_{\pi(y)})\simeq\bdl_{\pi(y)}\fall y\in\bdl.
\end{equation}
It follows that for $p\in\base$,
\[
(\bk^*VY)_p \simeq (VY)_{\bk(p)} \simeq Y_{\pi(\bk(p))} = Y_p,
\]
i.e.,

\begin{equation}
\bk^*V\pi\simeq\bdl
\end{equation}
independently of $\bk$. As such, this theory may be referred to
as linear.


Next, since $(\bdl,\pi,\base)$ is a vector bundle, so is $(J^1Y,\pi^1,\base)$.
Thus, in analogy, for any section $\xi:\base\to J^1Y$, 
\begin{equation}
\xi^*VJ^1Y \simeq J^1\bdl.
\end{equation}

\subsection{Affine bundles}

Consider the case where $(\bdl,\pi,\base)$ is an affine bundle modeled
on the vector bundle $(E,\rho,\base)$ (see \cite[p. 48]{Saunders}).
It is implied that for every $y\in\bdl$, 
\begin{equation}
(VY)_{y}\simeq T_{y}(Y_{\pi(y)})\simeq E_{\pi(y)}.
\end{equation}
Hence, 
\begin{equation}
\bk^*VY\simeq E
\end{equation}
independently of $\bk$.
Likewise, by  Lemma \ref{lem:jetsPBs} , 
\begin{equation}
(j^1\bk)^*VJ^1Y \simeq J^1(\bk^*VY)\simeq J^1E,
\end{equation}
independently of $\bk$.

\subsection{The bundle of frames}

As mentioned in the introduction, the bundle of frames plays an important role in the geometric theory of continuous distribution of dislocations.
Consider the bundle of frames $(F\base,\pi,\base)$ whose fiber at
$p\in\base$ is $GL(\reals^{d},T_{p}\base$), the space of invertible
mappings $\reals^{d}\to T_{p}\base$ viewed as bases (frames) in $T_{p}\base$.
Since $GL(\reals^{d},T_{p}\base)$ is an open subset of the
vector space of linear mappings $\Hom(\reals^{d},T_{p}\base)$, $F\base$
is a sub-bundle of the vector bundle $\Hom(\base\times\reals^{d},T\base)$.
We may conclude that for any $y\in F\base$,
\[
(VY)_{y}\simeq T_{y}(GL(\reals^{d},T_{\pi(y)}\base))\simeq\Hom(\reals^{d},T_{\pi(y)}\base)
\]
and for any frame field $\bk:\base\to F\base$,
\begin{equation}
\bk^*VY\simeq\Hom(\base\times\reals^{d},T\base).
\end{equation}
Again, we may further conclude that
\begin{equation}
(j^1\bk)^*VJ^1Y\simeq J^1(\bk^*VY)\simeq J^1(\Hom(\reals^{d},T_{\pi(y)}\base)).
\end{equation}
%

\subsection{Form-conjugate forces and stresses}

Consider the case where $Y=\ext^{p}T^*\base$, where $1\leq p\leq d$. This is
a special case of \ref{subsec:Vector-bundles} above. It follows that
for any $\bk\in\Q$, $\bk^*VY\simeq\ext^{p}T^*\base$
and $(j^1\bk)^*VJ^1Y\simeq J^1(\ext^{p}T^*\base)$.
Differentiable variational stress densities fields along $j^1\bk$ are, therefore, sections of
$\Hom(J^1(\ext^{p}T^*\base),\ext^{d}T^*\base)$
and continuous traction stress density fields are sections of $\Hom(\ext^{p}T^*\base,\ext^{d-1}T^*\base)$.

A particularly simple case follows when the traction stress density field $\trst = P_{{\bk}}\circ \StressDensity$
is given in terms of a $(d-p-1)$-differential form $g$ by
\begin{equation}
\trst\circ v=g\wedge v
\end{equation}
for any $p$-differential form $v$ on $\base$. It follows that 
\[
\begin{split}
d(\trst\circ v) & =d(g\wedge v),\\
 & =dg\wedge v+(-1)^{d-p-1}g\wedge dv.
\end{split}
\]
Set,
\begin{equation}
\mathfrak{f}=dv,\qquad\mathfrak{J}=dg.\label{eq:Max1}
\end{equation}
It follows that, 
\begin{equation}
d\mathfrak{f}=0,\qquad d\mathfrak{J}=0.\label{eq:Max2}
\end{equation}
The representation of the force 
\[
\begin{split}\int_{\base}\StressDensity\circ j^1v & =\int_{\base}d((P_{\bk}\circ\StressDensity)\circ v)-\int_{\base}\diver_{\bk} \StressDensity\circ v,\\
 & =\int_{\base}d((P_{\bk}\circ \StressDensity)\circ v)+\int_{\base}\ForceDensity\circ v,
\end{split}
\]
may be written now as
\[
\int_{\base}\StressDensity\circ j^1v =
\int_{\base}(\mathfrak{J}\wedge v+(-1)^{d-p-1}g\wedge\mathfrak{f}+\ForceDensity).
\]
Assume that $d=4$, $p=1$ and consider the special case where $\ForceDensity=0$.
If we interpret $v$ as the $1$-form potential (vector potential)
of electrodynamics, the expression we obtained for the power, together
with equations (\ref{eq:Max1}, \ref{eq:Max2}) are the governing
equations of electrodynamics. Here, $\mathfrak{J}$ is interpreted
as the $4$-current density, $\mathfrak{f}$ is interpreted as the
Faraday $2$-form, and $g$ is interpreted as the Maxwell $2$-form.
As no metric structure is used, the equations are in the framework
of pre-metric electrodynamics (see \cite{Hehl-Itin06}).
In case the Lorentz metric is used, one can apply the Hodge star operator and write $ g=*\mathfrak f $ and obtain the usual Maxwell equations in vacuum, as for example in \cite[Chapter 3]{Gravitation}.

In the general case, one obtains a generalization of electrodynamics,
referred to as $p$-form electrodynamics, as in \cite{Henneaux1986,Henneaux1988}.
(For further details see \cite{Segev_e_d_2016}.)
A different theory in which form-conjugate forces appear is the theory
of dislocations. In the case of dislocations, however, the traction
stress does not have a simple form as assumed above. In fact, $d\mathfrak{f}=0$
implies that no dislocations are present (see \cite{EpsteinAndSegevDisloc2012,EpsSeg2015}). 

\subsection{Trivial fiber bundles}
Consider the case where the fiber bundle $(Y,\pi.\base)$ is trivial, that is 
\[
Y=\base\times\man, 
\]
 for some $m$-dimensional manifold $\man$. A section representing a configuration in this case is the graph of a mapping 
\(
\base\tto \man.
 \)
Thus, it is natural in the case of trivial bundles to identify the configuration with such a mapping
\[ 
\bk:\base\tto \man.
 \]
It follows that the configuration space $\Q$ is an open subset of the manifold of mappings $C^1(\base,\man)$.

Since $TY\simeq T\base\times TM$, $ VY=\ker T\pi $ may be identified naturally with $\base\times T\man$. It follows that a generalized velocity at $\bk$ may be represented by a mapping
\[  
v:\base\tto T\man,\quad \text{such that,}\quad  \tau_\man\circ v=\bk,
\]
or alternatively, with a section of the vector bundle
\[
(\bk^*T\man,\bk^*\tau_\man,\base). 
\]

The jet bundle $J^1Y$ can now be replaced by the jet space of mapping, $ J^1(\base,\man) $ (see \cite[Chapter 1]{michor80}). An element of $ J^1(\base,\man) $ is determined by the tangent mapping $ T_p\bk $ for some mapping $ \bk $; $ \strain $ may be identified with the space of continuous vector bundle morphisms $ T\base\to T\man $ over $ C^1 $-mappings $ \base\to\man $.

For further details regarding the forms assumed by continuous forces and stresses, see \cite{SegevRevMMAS2012}.

\subsection{Continuum mechanics on manifolds}

The previous example includes the case of continuum mechanics on manifolds. Here, the manifold $\man$ is interpreted as the physical space $S$. For continuum mechanics, it is customary to require that configurations satisfy the principle of material impenetrability which implies that $ \Q $ be the space of all embeddings $ \base\to S $. Indeed, the subset of embeddings is open in the manifold of mappings $ C^1(\base,S) $ (see \cite{Hirsch,michor80}).

In the classical case, both $ \base $ and $ S $ are 3-dimensional and $ S $ is a Euclidean space.

\subsection{Continua with microstructure}

Continuum mechanics of materials with microstructure falls under the case where $ Y $ has the structure of a Cartesian product $ \base\times\man $, however, the manifold $ M $ has the additional structure of a fiber bundle 
\[
(\man,\rho,S).
\]
The fiber over the point $ z\in S $ is interpreted as possible micro-states that a material point located at $ z $ may assume. Terms such as ``internal variables'', ``internal degrees of freedom'', ``order parameters'', etc. are also used for the micro-states.

A configuration of a body with microstructure is a mapping
\[ 
\bk :\base\tto\man 
\]
for which \[ 
\bk_0:=\rho\circ\bk:\base\to S
 \]
 is an embedding.
 
For bodies with microstructre, forces and stresses include components that are conjugate to the micro-velocities and their jets (see \cite{MicroStrucMMMAS94}).

A number of examples of continua with microstructure are discussed in \cite{mermin79,Capriz89}. These include mixtures, liquid crystals, Superfluid helium-3, Cosserat continua, etc.
\subsection{Group action}
We consider the case where for some Lie group $ G $ there is a smooth left action 
\[ 
\lambda : G\times Y\tto Y
\]
such that for all $ g\in G $,
\[  
\lambda_g:\{g\}\times Y \tto Y
\]
is a fiber bundle morphism over the identity of $ \base $. Thus, for $ p\in \base $, the image of $ \lambda_p:G\times \{p\}\to Y $ is contained in $  Y_p $.

The tangent mapping
\[  
T\lambda:TG\times TY\tto TY
\]
may be decomposed into partial tangents, and in particular
\[  
T_1\lambda:TG\times Y \tto TY
\] 
may be restricted to the Lie algebra, $ T_e G $, and we obtain
\[  
T_1\lambda_e:T_eG\times Y\tto TY.
\]
It is observed that the image of $ T_{1}\lambda_e $ is a subset of the vertical sub-bundle. It follows that for any $ \gamma\in T_eG $, $\bk\in\Q$, there is a generalized velocity
\[  
v_\gamma:\base\to VY,
\quad v_\gamma(p)=T_1\lambda_e(\gamma,\bk(p)).
\]

It would be natural, therefore, to define an \Emph{equilibrated force} as an element $ f\in T^*_{\bk}\Q $ such that 
\[  
f(v_\gamma)=0,\quad\text{for all}\quad\gamma\in T_eG
\]
(see also \cite{MicroStrucMMMAS94}).
Requiring that all forces on all subbodies be equilibrated restricts the class of stresses. For example, in the classical formulation of continuum mechanics in a Euclidean space, invariance under the Euclidean group implies that the components $ \mu_i $ of the variational stress vanish and the components $ \mu_i^\alpha $ satisfy the usual symmetry conditions for the first Piola-Kirchhoff stress.



\medskip
\Emph{Acknowledgement:} {RK was partially supported by the Israel Science Foundation (Grant No. 661/13), and by a grant from the Ministry of Science, Technology and Space, Israel and the Russian Foundation for Basic Research, the Russian Federation. RS was partially supported by the H. Greenhill Chair for Theoretical and Applied Mechanics and by the Pearlstone Center for Aeronautical Engineering Studies at Ben-Gurion University.}
%

\end{document}